
\documentclass[journal]{IEEEtran}
\ifCLASSINFOpdf
  % \usepackage[pdftex]{graphicx}
  % declare the path(s) where your graphic files are
  % \graphicspath{{../pdf/}{../jpeg/}}
  % and their extensions so you won't have to specify these with
  % every instance of \includegraphics
  % \DeclareGraphicsExtensions{.pdf,.jpeg,.png}
\else
  % or other class option (dvipsone, dvipdf, if not using dvips). graphicx
  % will default to the driver specified in the system graphics.cfg if no
  % driver is specified.
  % \usepackage[dvips]{graphicx}
  % declare the path(s) where your graphic files are
  % \graphicspath{{../eps/}}
  % and their extensions so you won't have to specify these with
  % every instance of \includegraphics
  % \DeclareGraphicsExtensions{.eps}
\fi

\usepackage{cite}
\usepackage{amsmath,amssymb,amsfonts}
\usepackage{algorithmic}
\usepackage{graphicx}
\usepackage{textcomp}
\usepackage{xcolor}
\usepackage{stfloats}
\usepackage{amsmath}
\usepackage{algorithm}
\usepackage{algorithmic}

\usepackage{multirow}  % 列合并需要的宏包
\usepackage{array}	   % 对齐相关的宏包
\usepackage{booktabs}  % 三线表宏包
\usepackage{tabularx}  % 自动平均分配列宽的宏包
\usepackage{longtable} % 跨页表格需要的宏包
\usepackage{tabu}      % 大表格需要的宏包
\usepackage{threeparttable} % 三段式表格，主要用于表格内引用
\usepackage{caption}   % 图片脚注
\usepackage{graphicx}  
\usepackage{caption}
\usepackage{subfigure} % 子图包
\usepackage{url}
\usepackage{epstopdf}

% correct bad hyphenation here
\hyphenation{op-tical net-works semi-conduc-tor}
% 定义命题环境
\newtheorem{proposition}{Proposition}

% 定义证明环境
\newtheorem{proof}{Proof}

\begin{document}
%
% paper title
% Titles are generally capitalized except for words such as a, an, and, as,
% at, but, by, for, in, nor, of, on, or, the, to and up, which are usually
% not capitalized unless they are the first or last word of the title.
% Linebreaks \\ can be used within to get better formatting as desired.
% Do not put math or special symbols in the title.
\title{Digital Twin-Assisted Federated Learning with Blockchain in Multi-tier Computing Systems}
%
%
% author names and IEEE memberships
% note positions of commas and nonbreaking spaces ( ~ ) LaTeX will not break
% a structure at a ~ so this keeps an author's name from being broken across
% two lines.
% use \thanks{} to gain access to the first footnote area
% a separate \thanks must be used for each paragraph as LaTeX2e's \thanks
% was not built to handle multiple paragraphs
%

\author{Yongyi~Tang,~Kunlun~Wang,~Dusit~Niyato,~Wen Chen,~and~George~K.~Karagiannidis
% <-this % stops a space
\thanks{Yongyi Tang and Kunlun Wang are with the School of Communication and Electronic Engineering, East China Normal University, Shanghai 200241, China (e-mail: 51255904070@stu.ecnu.edu.cn; klwang@cee.ecnu.edu.cn).}
\thanks{Dusit Niyato is with the School of Computer Science and Engineering, Nanyang Technological University, Singapore 639798 (e-mail: dniyato@ntu.edu.sg).}
\thanks{Wen Chen is with the Department of Electronic Engineering, Shanghai Jiao
Tong University, Shanghai 200240, China (email: wenchen@sjtu.edu.cn).}
\thanks{George K. Karagiannidis is with the Department of Electrical and Computer Engineering, Aristotle University of Thessaloniki, 54 124 Thessaloniki, Greece and also with the Cyber Security Systems and Applied AI Research Center, Lebanese American University (LAU), Lebanon (e-mail: geokarag@auth.gr).}
}

\maketitle

% As a general rule, do not put math, special symbols or citations
% in the abstract or keywords.
\begin{abstract}
In Industry 4.0 systems, a considerable number of resource-constrained Industrial Internet of Things (IIoT) devices engage in frequent data interactions due to the necessity for model training, which gives rise to concerns pertaining to security and privacy. In order to address these challenges, this paper considers a digital twin (DT) and blockchain-assisted federated learning (FL) scheme. To facilitate the FL process, we initially employ fog devices with abundant computational capabilities to generate DT for resource-constrained edge devices, thereby aiding them in local training. Subsequently, we formulate an FL delay minimization problem for FL, which considers both of model transmission time and synchronization time, also incorporates cooperative jamming to ensure secure synchronization of DT. To address this non-convex optimization problem, we propose a decomposition algorithm. In particular, we introduce upper limits on the local device training delay and the effects of aggregation jamming as auxiliary variables, thereby transforming the problem into a convex optimization problem that can be decomposed for independent solution. Finally, a blockchain verification mechanism is employed to guarantee the integrity of the model uploading throughout the FL process and the identities of the participants. The final global model is obtained from the verified local and global models within the blockchain through the application of deep learning techniques. The efficacy of our proposed cooperative interference-based FL process has been verified through numerical analysis, which demonstrates that the integrated DT blockchain-assisted FL scheme significantly outperforms the benchmark schemes in terms of execution time, block optimization, and accuracy.
\end{abstract}

% Note that keywords are not normally used for peerreview papers.
\begin{IEEEkeywords}
Digital twin, federated learning, blockchain, security, cooperative jamming, industrial communications.
\end{IEEEkeywords}

% For peer review papers, you can put extra information on the cover
% page as needed:
% \ifCLASSOPTIONpeerreview
% \begin{center} \bfseries EDICS Category: 3-BBND \end{center}
% \fi
%
% For peerreview papers, this IEEEtran command inserts a page break and
% creates the second title. It will be ignored for other modes.
\IEEEpeerreviewmaketitle

\section{Introduction}
% The very first letter is a 2 line initial drop letter followed
% by the rest of the first word in caps.
% 
% form to use if the first word consists of a single letter:
% \IEEEPARstart{A}{demo} file is ....
% 
% form to use if you need the single drop letter followed by
% normal text (unknown if ever used by the IEEE):
% \IEEEPARstart{A}{}demo file is ....
% 
% Some journals put the first two words in caps:
% \IEEEPARstart{T}{his demo} file is ....
% 
% Here we have the typical use of a "T" for an initial drop letter
% and "HIS" in caps to complete the first word.
\IEEEPARstart{T}{he} deployment of fifth-generation (5G) wireless networks and the advancement of sixth-generation (6G) wireless networks have prompted the industry to explore relevant technologies, requirements, and use cases for Industry 4.0. Industry 4.0 relies on advanced machine capabilities and accelerated data analytics combined with artificial intelligence (AI) to build autonomous, self-configuring systems that optimize manufacturing efficiency, precision, and accuracy through the integration of new methods, including the Internet of Things (IoT), digital twin (DT), federated learning (FL), and blockchain \cite{10234720}, \cite{10114989}. However, Industry 4.0 requires distributed intelligent services to adapt to dynamic environments in real time. Due to the complexity of industrial environments and the heterogeneity of Industrial Internet of Things (IIoT) devices, ensuring the security and privacy of data collection and processing among various participants in the industrial ecosystem is a challenging task.
\par FL is a distributed collaborative model training method that emphasizes privacy protection and enables disparate devices to collaboratively build accurate and stable global models. Compared to traditional centralized learning methods, FL enables more efficient data processing by relying on collaboration among various participating nodes to achieve distributed training. This decentralizes the training process as each device is responsible for processing a portion of the training data, allowing for faster and more accurate analysis \cite{9692911}, \cite{9815106}.
\par At the same time, DT-based FL has attracted considerable interest in the context of IIoT. DT facilitates the transformation of physical entities or systems in industrial contexts into their digital forms, enabling the modeling of industrial ecosystems, real-time monitoring, prediction, and interaction within virtual environments. In essence, DT bridges the gap between the physical and virtual realms, facilitating data collection and simulation of industrial processes. The use of DTs has enabled the migration of real-time data analysis and processing to the edge, enhancing the effectiveness of machine learning algorithms by allowing distributed learning solutions to be deployed in intelligent industrial environments \cite{10091890}, \cite{10188847}, \cite{9899718}.
\begin{figure}[htbp]
    \centering
    \includegraphics[width=7.5cm]{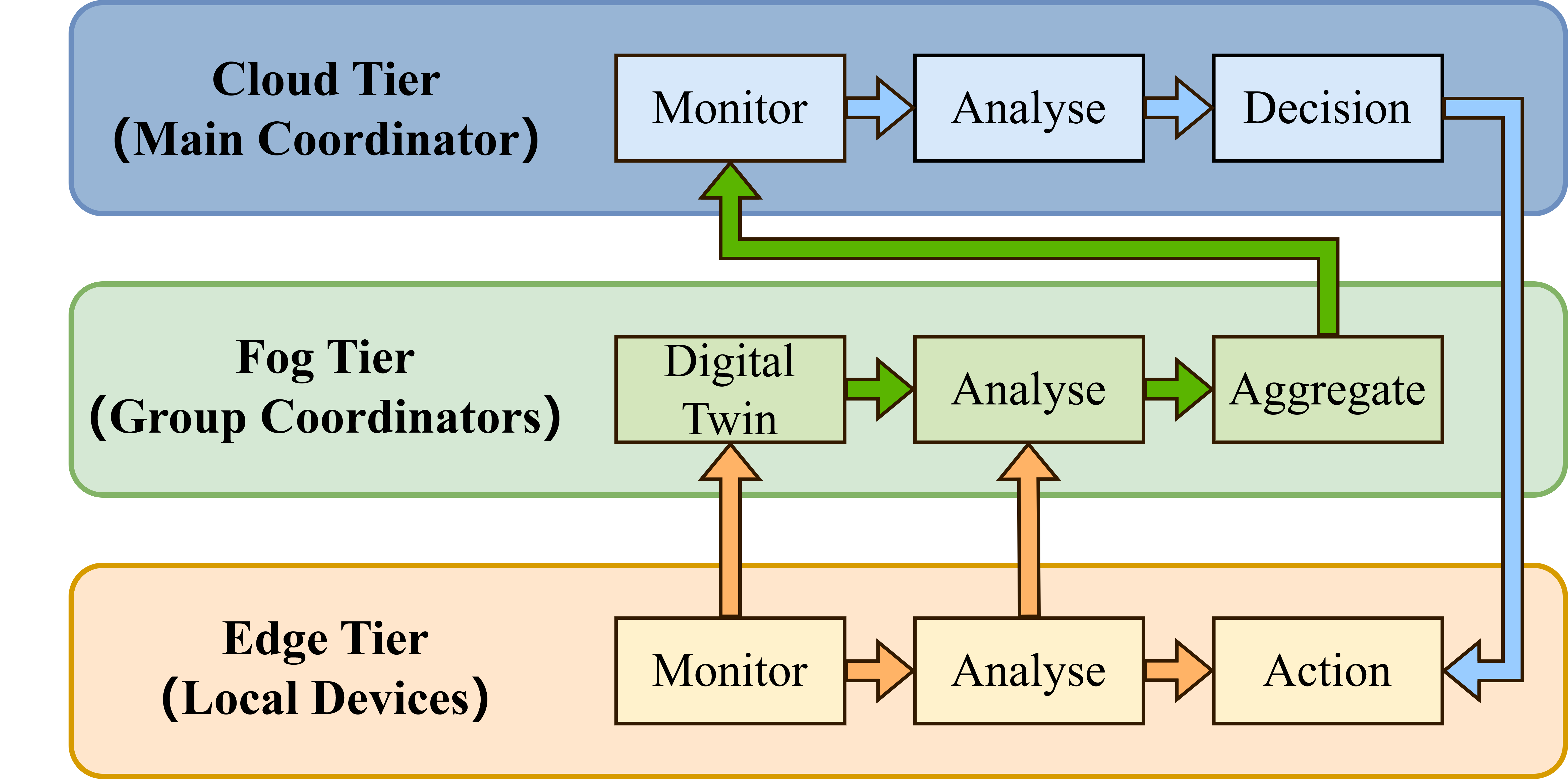}
    \caption{A multi-tier collaboration model of equipment in Industry 4.0.}
    \label{P1}
\end{figure}
\par In Fig. \ref{P1}, we illustrate the connections and interdependencies between industrial devices, edge devices, and main coordinator devices at different levels of the Industry 4.0 decision-making process. Establishing such cooperation and correlation is essential for achieving and continuously improving system efficiency \cite{9978919}. Local devices and their corresponding DTs perform continuous monitoring, assessing the status of both devices and the surrounding environment through data collection. The collected industrial data is used for local training and learning processes related to the equipment. In addition, the creation of DTs is achieved by sharing physical behaviors and state characteristics between local devices and edge devices. DTs can coordinate the training process and even replace local or edge devices in performing training tasks, thereby enriching the data sources for FL and improving the quality of FL training.
\par However, due to the diversity of participating devices and environments, certain limitations exist, including resource capability, communication efficiency, security, and other performance-related metrics \cite{9709603}. For example, devices with limited resources may only be able to perform data collection and upload tasks, lacking the capacity for local training and secure data transfer. Although these local devices can use the resources of their respective DTs and edge servers to perform local training tasks, during the synchronization process of DTs with local device attribute information or dataset updates, malicious attackers can intercept signals from industrial devices due to the open access nature of wireless channels, potentially leading to eavesdropping or data transmission tampering. To address these security concerns, the academic community has proposed a novel approach using artificial jamming, in which friendly jammers deliberately transmit jamming signals to interfere with eavesdroppers' ability to receive and decode transmitted data. This strategy has been identified as a promising solution for improving data security throughput \cite{9839650}, \cite{9580594}. Specifically, cooperative jamming among wireless devices improves data transmission security and reduces the delay of FL iterations. When local devices upload data, wireless devices simultaneously transmit jamming signals to eavesdroppers, improving secure data throughput. However, to maximize the benefits of cooperative jamming while managing the energy constraints of wireless devices, it is critical to accurately allocate each device's energy budget and ensure an optimal allocation between local model training, data transmission, and cooperative jamming.
\par While the introduction of FL and DT technologies in IIoT offers advantages in privacy protection and distributed model training, challenges remain, such as the lack of trust mechanisms, low collaboration efficiency, scalability issues, and model verification challenges. The use of blockchain in industrial scenarios enhances the effectiveness and reliability of FL. The decentralized and tamper-proof nature of blockchain facilitates the recording of user access and ensures data integrity. Blockchain operates without relying on a central server or single entity, and is collectively maintained by multiple nodes within the network \cite{9676337}. This architecture allows heterogeneous devices to interact with updated information and rely on global records for accurate decision making, which is critical for fully autonomous, adaptive, and self-healing industrial systems. In addition, the automatic execution of smart contracts reduces human intervention, thereby improving productivity and quality. However, as industry advances, the amount of data that needs to be processed is growing exponentially, creating scalability challenges for blockchain's performance and adaptability. These challenges can lead to increased blockchain latency and reduced throughput, jeopardizing the system's accuracy, security, and privacy.
\par In this article, we focus on designing a blockchain- and DT-enabled FL solution to support industrial model training services in Industry 4.0 scenarios. Specifically, we use DT to enable resource-limited industrial devices to participate in FL, enrich the data sources for FL, and improve the accuracy of model training. We then address the communication security issues related to the synchronization of DTs for resource-constrained industrial devices, and propose an approach that enhances the reliability of data transmission through cooperative jamming while minimizing the delay required for FL. Finally, by using blockchain and introducing a proposed validator selection algorithm, we further provide trust to FL participants and ensure the integrity of shared models. Moreover, within our proposed FL framework, the blockchain can achieve block optimization, reducing both the size and number of blocks. Below is a summary of our contributions to this article:
\par 1) We propose a blockchain and DT-assisted FL solution for Industry 4.0. The proposed solution allows local industrial devices with limited computational resources to participate in FL through DT.
\par 2) We introduce cooperative jamming for industrial local devices with limited resources to ensure the secure synchronization of corresponding DTs. Furthermore, we propose an effective algorithm for solving joint optimization problems to reduce the delay in FL iterations.
\par 3) We use blockchain to validate and verify uploaded local/global models, ensuring maximum data privacy and integrity.

\par The remainder of this article is organized as follows. Section \uppercase\expandafter{\romannumeral2} presents a review of the relevant literature. In Section \uppercase\expandafter{\romannumeral3}, we describe the blockchain and DT-assisted FL solution, and formulate the FL delay optimization problem within industrial equipment clusters. Section \uppercase\expandafter{\romannumeral4} introduces an algorithm for solving joint optimization problems. In Section \uppercase\expandafter{\romannumeral5}, we conduct a comparative analysis of the simulation results against other benchmark schemes. Finally, Section \uppercase\expandafter{\romannumeral6} summarizes the work presented and outlines potential avenues for future research.

% needed in second column of first page if using \IEEEpubid
%\IEEEpubidadjcol

\section{Related Work}
This section provides a comprehensive review of the applications of FL utilizing DT and blockchain technology in industrial contexts. Additionally, it offers an overview of pertinent literature that incorporates artificial jamming techniques to enhance security throughput while minimizing FL iteration latency.

\subsection{FL-Supported IIoT}
FL has gained significant traction in various applications, particularly in IIoT and edge computing environments. Traditionally, AI capabilities have been hosted in cloud or data center infrastructures, which limits the rapid growth of IIoT data volumes. Transferring large volumes of IIoT data to remote servers for model training requires high network bandwidth and creates significant communication overhead, making it unsuitable for time-sensitive applications. FL addresses this by averaging local updates from multiple clients without accessing their data, reducing data privacy risks. Furthermore, FL leverages the computational resources of numerous IIoT devices, improving the quality of data training compared to centralized machine learning methods. In addition, implementing FL in industrial environments can improve data offloading and caching, provide collaborative intelligence, detect and prevent security threats, and support mobile crowdsensing \cite{9508907}.
\par In \cite{10473157}, the authors propose a security-conscious computation offloading methodology that utilizes federated reinforcement learning in the context of IIoT. The computational tasks in smart factories are represented by a directed acyclic graph, and the offloading problem is formulated as a Markov decision process. This approach considers the optimization of latency, energy consumption, and the number of overdue tasks, and employs differential privacy to ensure data security. Deep reinforcement learning techniques are used to derive near-optimal offloading decisions. In \cite{9130128}, the authors present a joint attack detection and defense solution for IIoT systems that leverages the privacy-enhancing capabilities of FL. Each IIoT device is equipped with a locally operable deep neural network, allowing it to retrain its threat model to effectively counter adversarial attacks. Much of the existing research on FL has not adequately addressed the challenges associated with edge device connectivity and resource allocation in industrial scenarios. This oversight is largely due to the neglect of heterogeneity resulting from data interactions and the specific domains of traditional industrial devices.

\subsection{Digital Twin-Assisted FL}
\par DT-Assisted FL provides an innovative solution for the advancement of Industry 4.0. In \cite{10133837}, the authors presented a DT-assisted architecture for industrial mobile crowdsensing based on an incentive-driven FL framework, complemented by a contract-based reputation mechanism and a Stackelberg-based customer incentive mechanism. This approach aims to achieve dynamic sensing in complex IIoT environments, characterized by heterogeneous and resource-constrained mobile clients. In \cite{9244624}, the authors propose a DT-supported architecture for IIoT to capture the dynamic characteristics of industrial devices, thereby accelerating FL convergence, promoting collaborative learning, and enhancing learning efficiency. They also introduce a trustworthy aggregation-based FL approach to address potential estimation biases between DTs and actual device state values, thereby addressing the heterogeneity of IIoT environments. However, in terms of DT-enabled industrial device participation in FL, offloading all operational data to the DT may be impractical due to resource constraints. This method would incur significant communication costs, resource consumption, and time delays, in addition to raising privacy concerns.
However, with respect to the participation of DT-supported industrial equipment in the FL, offloading all operational data to the DT may be impractical due to resource constraints. This approach could result in significant communication costs, resource consumption and transmission delays. In addition, data transmission between the DT and industrial equipment is primarily via wireless links, which, due to the open nature of wireless channels, can lead to privacy and security issues \cite{9854182}.

\subsection{Secure Synchronization of DT}
To improve the security of DT synchronization, some studies have explored cryptographic schemes \cite{9765576}. In \cite{10234580}, the authors proposed an unbounded and efficient directly revocable attribute-based encryption scheme with adaptive security tailored for DTs. By using arithmetic span programs as access structures, this scheme effectively achieves revocability and fine-grained access control. In \cite{10239228}, the authors leveraged the advantages of hierarchical encryption schemes and privacy-preserving rollback re-encryption schemes to propose a blockchain-aware rollback data access control solution that enables dynamic access control to DT data while maintaining the immutability of the blockchain. Implementing secure communication protocols for IIoT systems that incorporate DTs is essential for maintaining effective network operations and ensuring secure data transmission. However, it is important to note that this encryption method may not be computationally efficient when applied to large-scale machine learning models, as industrial devices typically lack the processing power required to perform encryption and decryption operations.
\par In \cite{10566993}, the authors proposed a DT communication-friendly jamming method based on deep reinforcement learning. This approach optimizes the interference frequency, power, and duration by using friendly jammers, providing an effective means to combat active eavesdropping. By implementing a secure layered architecture, the anti-eavesdropping performance and confidentiality rate can be improved by leveraging information about the channel status between devices and servers, the malicious interference intensity of active eavesdroppers, and details about eavesdropping channels. While the use of third-party friendly jammers can facilitate secure data transmission for industrial devices, the associated costs may not be viable for industrial scenarios that prioritize cost-effectiveness and high productivity. In \cite{9738843}, the authors use FL to enable cooperative jamming by local devices, thereby mitigating the risk of eavesdropping during global model broadcasts from the central server. In addition, they introduce a hierarchical algorithm to reduce the iteration delay inherent in FL. However, further research is needed to improve the security of uplink communication in the context of FL in industrial applications.

\begin{figure*}[htbp]
    \centering
    \includegraphics[width=11cm]{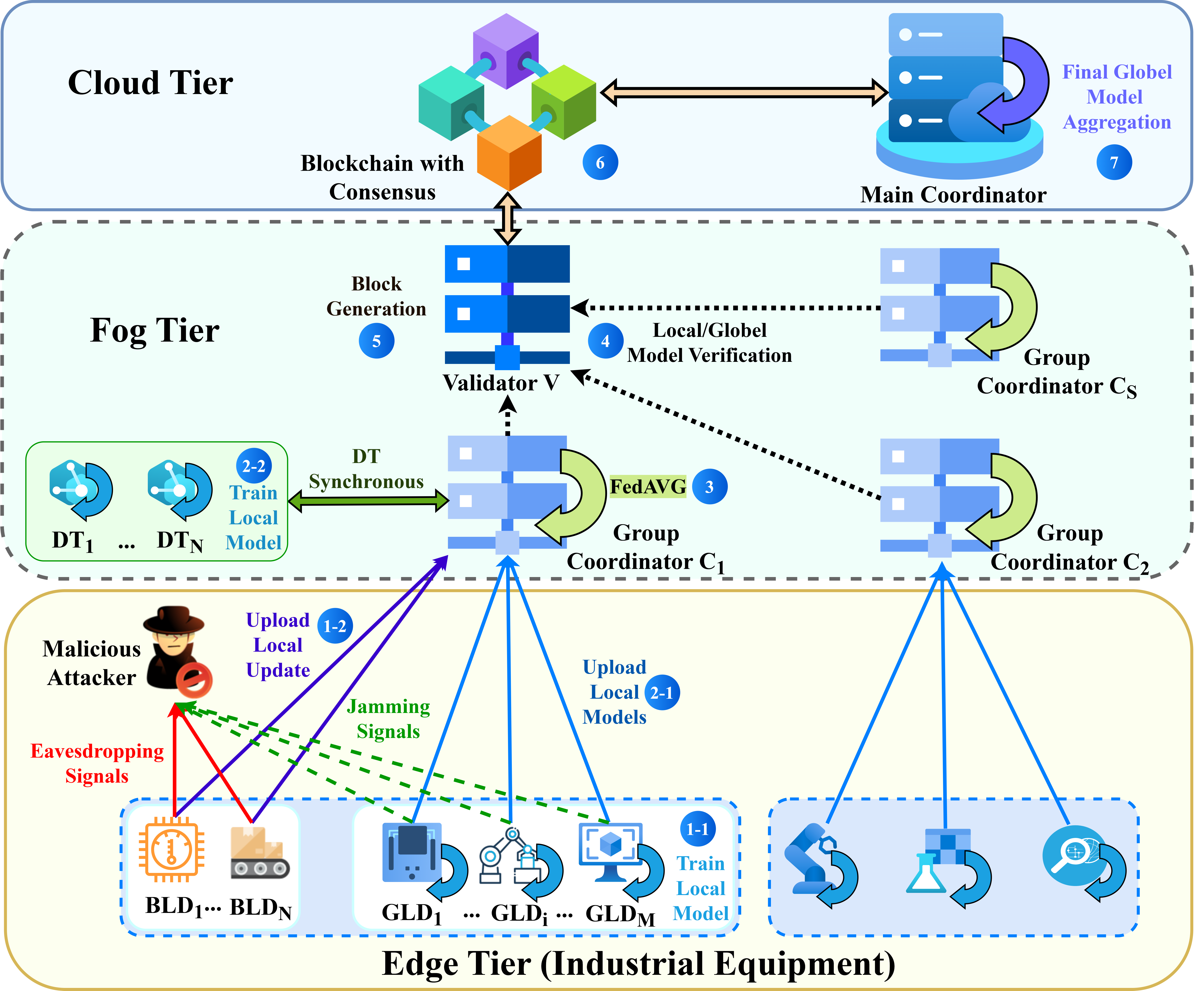}
    \caption{The framework of DT-assisted FL with blockchain in multi-tier computing systems.}
    \label{P2}
\end{figure*}

\subsection{Blockchain-based FL in IIoT}
The blockchain-based FL scheme for sharing distributed data among multiple untrusted parties is particularly well suited for applications within the IIoT. In \cite{8843900}, the authors integrate FL into the consensus process of a licensed blockchain, effectively preserving data privacy by sharing data models rather than disclosing the actual data. In \cite{9134967}, the authors propose a decentralized paradigm for big data-driven cognitive computing that uses FL to address the challenges posed by data silos. The integration of blockchain provides incentive mechanisms that facilitate full decentralization and enhance resilience to poisoning attacks in industrial automation processes. Furthermore, in \cite{9714771}, the authors present an average FL mechanism based on homomorphic encryption coupled with a credit data storage mechanism within the blockchain. This model allows credit data providers to effectively monitor the data usage process.

\par In contrast to the aforementioned studies, the proposed blockchain and DT-assisted FL framework effectively supports industrial model training services in Industry 4.0 scenarios. Specifically, we leverage DT technology to enable resource-constrained industrial devices to participate in federated learning, thereby enriching the data sources and improving the accuracy of model training. In addition, we address the communication security issues faced by resource-constrained industrial devices during the synchronization of DTs by proposing a cooperative jamming approach to improve the reliability of data transmission while minimizing the delay required for Fl. Finally, through the use of blockchain and our proposed validator selection algorithm, we further enhance trust among FL participants and ensure the integrity of shared models. In addition, within our proposed FL framework, blockchain can achieve block optimization, reducing both the size and number of blocks. The definitions of some common notations used in this paper are summarized in Table \ref{tb1}.

\begin{table}[!htbp]
\centering
\caption{Common Notations Used}
\renewcommand{\arraystretch}{1.25} % 调整此值以增加或减少行距
\begin{tabular}{c|c}
    \toprule
    Notation & Definition \\
    \midrule
    $t_i^{\rm loc}$ & Latency for $GLD_i$ to complete local training \\
    $t^{\rm up}_{\rm G}$ & Duration of the $\{GLD_i\}_{i \in\mathcal{M}}$ uploading transmission \\
    $t^{\rm loc}_{\rm B}$ & Latency for local training of the DTs \\
    $t^{\rm up}_{\rm B}$ & Duration of the $\{BLD_j\}_{j \in\mathcal{N}}$ uploading transmission \\
    $E_i^{\rm loc}$ & Energy consumption of $GLD_i$ to complete local training \\
    $E_i^{\rm up}$ & Energy consumption for $GLD_i$ to upload local model \\
    $E^{\rm jam}_i$ & Energy consumption of $GLD_i$ on jamming signal \\
    $v_i$ & CPU rate of $GLD_i$ \\
    $D_{i,\rm G}$ & Size of $GLD_i$ local training data \\
    $D_{j,\rm B}$ & Size of $BLD_j$ local training data \\
    $p_i^{\rm G}$ & Transmission power of $GLD_i$ \\
    $p_j^{\rm B}$ & Transmission power of $BLD_j$ \\ 
    $R_j^{\rm sec}$ & Secrecy rate of $BLD_j$ \\
    $y$ & Upper bound of all GLDs’ local training latency \\
    $Q$ & Aggregate jamming effect \\
    \bottomrule
\end{tabular}
\label{tb1}
\end{table}

\section{System Framework and Problem Formulation}
We demonstrate the complete framework of DT-assisted FL with blockchain in Fig. \ref{P2}. We consider multiple industrial clusters managed by their respective group coordinators $\{C_s\}_{s\in S}$. Within an industrial cluster, there are $k$ local IIoT devices, indexed by $\{LD_1, LD_2,\cdots, LD_k\}$. After classifying the local devices, we define $\mathcal{M}$ devices with sufficient resources as $\{GLD_i\}_{i\in \mathcal{M}}$ and $\mathcal{N}$ devices with limited resources as $\{BLD_j\}_{i \in \mathcal{N}}$. $\{GLD_i\}_{i\in \mathcal{M}}$ can send cooperative jamming signals to malicious attackers to ensure the security of DT synchronization. The group coordinator can generate $\mathcal{N}$ DTs for $\{BLD_j\}_{i \in \mathcal{N}}$ to facilitate local training, and perform averaging aggregation of the local models within the cluster to produce local/global models. We use a validator selection algorithm to determine validators $V$ for authenticating participants and verifying model integrity, thereby generating blocks that can be added to the blockchain. Finally, the main coordinator aggregates all verified local/global models using deep learning techniques to obtain the final global model \cite{10234720}.

\subsection{DT-Assisted Industrial Cyber-Physical Systems}
Considering that not all IIoT devices have the capability to perform the computationally intensive tasks required for the FL process, local devices can offload FL training tasks to other edge devices with more computational resources. Furthermore, to optimize the use of local device data and mitigate the risk of data leakage or tampering associated with the transmission of raw data, the FL process used in this work leverages the DT of local industrial devices with limited resources to train the local device model. The edge nodes create and update the DT of the industrial devices based on relevant data, and then update the blockchain with this information. Any changes in the status of the device will be added to the blockchain in the form of blocks. The DT modeling of local industrial equipment with insufficient resources can be represented as follows:

\begin{equation}\label{eq1}
\begin{aligned}
DT_{BLD_j}(t)&=\{State_{BLD_j}(t), Static_{BLD_j}(t),\\& Resource_{BLD_j}(t), Deviation_{BLD_j}(t)\},
\end{aligned}
\end{equation}
where $State_{BLD_j}(t)$ represents the training state of the mapped device $BLD_j$ during the FL process at time $t$, while $Static_{BLD_j}(t)$ represents the static operating data of the device $BLD_j$ at time $t$. $Resource_{BLD_j}(t)$ represents the resource state of $BLD_j$ at time $t$, including computing power, communication bandwidth, memory capacity, and power supply capabilities. In general, there may be a mismatch between the mapping value of DT at time $t$ and the mapping value of the local device capability characteristics. This can be defined as the deviation $Deviation_{BLD_j}(t)$ between the actual value and the mapping DT value. In fact, upon receiving the updated device state, DT performs self-calibration to maintain the minimum deviation, thus ensuring dynamic optimization through accurate decision making. It should be noted that although (\ref{eq1}) indicates a bias in the mapping of DT, this bias can occur over an extended period of time and can be considered negligible when performing training tasks on DT \cite{9244624}.

\subsection{Blockchain with Conscensus}
In our framework, a consortium blockchain is established between the group coordinator and the master coordinator. This blockchain serves all clusters and edge devices. Before adding blocks, validator nodes verify all blocks to be added. Once the group coordinator completes the model update, the validators create a block. The validation process focuses on verifying the identities of the participants and ensuring the integrity of the model against the primary global model. Once verified, the block contains all collected records and is shared with all participants who wish to be added to the blockchain. For this purpose, it is essential to select fog devices as validators. The number of validators assigned to the fog tier is determined by the main coordinator. The validator selection process uses a reputation-based mechanism \cite{9737334}. According to (\ref{eq2}), the reputation score of the group coordinator can be calculated. This is achieved by combining a number of factors, including the coordinator's ability to encrypt data, its ability to route securely, the proportion of supporting local devices that generate DTs, and its historical behavioral characteristics. The reputation score is expressed as

\begin{equation}\label{eq2}
\begin{aligned}
Score^{\rm Rep}_{C_s}&=\omega_{Enc}(Cap^{\rm Enc}_{C_s}(t))+\omega_{Rou}(Cap^{\rm Rou}_{C_s}(t))\\&+\omega_{Prop}(Prop_{C_s}(t))+\omega_{Hist}(Hist_{C_s}(t)),
\end{aligned}
\end{equation}
where $\omega_{Enc}$, $\omega_{Rou}$, $\omega_{Prop}$ and $\omega_{Hist}$ represent the score weight. $Cap^{\rm Enc}_{C_s}(t)$ represents the encryption capability of the group coordinator. Assuming that all group coordinators need to encrypt publicly accessible data, a shorter encryption time and higher randomness in the encrypted data will result in a higher score for the group coordinator. $Cap^{\rm Rou}_{C_s}(t)$ denotes the secure routing capability of a device, with the assumption that the data to be transmitted is public. A group coordinator with direct communication capability to the main coordinator, for instance, without the use of a relay or access point, will achieve a higher score. $Prop_{C_s}(t)$ is the proportion of $LD_k$ that the group coordinator assists $BLD_j$ in generating DT for local training. A higher proportion increases the risk of $BLD_j$ synchronization information being tampered with, thereby affecting the accuracy of the final model. Consequently, the fewer DTs the group coordinator needs to generate, the higher the score will be. $Hist_{C_s}(t)$ defines the historical behaviour of group coordinators, including their frequency of participation in consensus, success rate, block generation quality, and contribution to the network. A reduction in score may result from intentional wrongdoing, an inability to maintain consistent online availability, or a failure to comply with consensus agreements.
\par The candidate with the highest score is designated as the validator $V$, who is then responsible for validating the local/global models submitted by all group coordinators.

\subsection{DT-Assisted FL with Blockchain}

\subsubsection{Local Device Classification}
If the inherent capabilities of the LDs are sufficient to meet the needs of local training and model transmission, it is not necessary for the group coordinator to assist in the generation of DTs. Otherwise, DTs will be needed to replace the resource scarce LDs for local training. To achieve this goal, a threshold $threshold (t) $ has been defined to classify the participating LDs in training. The resource score of the local device is calculated based on the formula given in (\ref{eq3}). If the score of $LD_k$ satisfies $Score_{LD_k}(t)\leq threshold(t)$, the group coordinator generates the corresponding DT for $LD_k$, which is then designated as $BLD_j$. The $LD_k$ that does not require the participation of a group coordinator in DT generation is designated as $GLD_i$. This type of local device can independently perform local training of FL and generate jamming signals to assist in the safe synchronization of DT with $BLD_j$.

\begin{equation}\label{eq3}
\begin{aligned}
Score_{LD_k}(t)=&\omega_{Proc}(Cap^{\rm Proc}_{LD_k}(t))+\omega_{Store}(Cap^{\rm Store}_{LD_k}(t))
\\&+\omega_{Com}(Cap^{\rm Com}_{LD_k}(t))+\omega_{Pow}(E_{LD_k}(t)),
\end{aligned}
\end{equation}
where $Cap^{\rm Proc}_{LD_k}(t)$ is the power capability score measured in terms of the CPU power and the CPU cycles per bit required to perform and execute a task. $Cap^{\rm Store}_{LD_k}(t)$ is the storage capacity measured in bits of the $LD_k$, and $Cap^{\rm Com}_{LD_k}(t)$ is the wireless communication distance and bandwidth capabilities of the $LD_k$. $E_{LD_k}(t)$ of a device incorporates the residual battery power, the transmission power of data when communicating with other $LD_k$, and the execution power of the CPU cycle. $\omega_{Proc}$, $\omega_{Store}$, $\omega_{Com}$ and $\omega_{Pow}$ represent the score weight.

\subsubsection{Local Model Training for $GLD_i$ and $BLD_j$}
In the case of $GLD_i$, each $GLD_i$ performs its local model training based on its local data. $D_{i,\rm G}$ represents the size of the $GLD_i$ local sample data in bits, while $v_i$ represents local CPU rate of $GLD_i$. Consequently, the latency required for $GLD_i$ to complete local training can be expressed as
\begin{equation}\label{eq4}
t_i^{\rm loc}=\frac{\eta\iota D_{i,\rm G}}{v_i}, \forall i\in \mathcal{M},
\end{equation}
where $\eta$ and $\iota$ are the number of local training iterations, and the number of CPU cycles for training a single bit, respectively. Accordingly, the energy consumption of $GLD_i$ to complete local training can be expressed as
\begin{equation}\label{eq5}
E_i^{\rm loc}=\kappa \eta \iota D_{i,\rm G}v_i^{2}, \forall i\in \mathcal{M},
\end{equation}
where $\kappa$ is the effective switched capacitance \cite{9860543}. The duration of local model training is equal to $\max_{i\in\mathcal{M}}\{t_i^{\rm loc}\}$, which respresents the duration for all GLDs to complete their respective local training.
\par Similarly, the latency of the DTs generated by the group coordinator for local model training corresponding to $BLD_j$ can be expressed as 

\begin{equation}\label{eq6}
t^{\rm loc}_{\rm B}=\frac{\sum_{j\in \mathcal{N}}\eta\iota D_{j,\rm B}}{v_{\rm c}},
\end{equation}
where $v_{\rm c}$ and $D_{j,\rm B}$ are the CPU rate used by the group coordinator $C_s$ for local training of DTs and the size of $BLD_j$ local training data, respectively.

\subsubsection{Local Model Upload for $GLD_i$ and DT Synchronization with $BLD_j$}
We ultilize non-orthogonal multiple access (NOMA) technique to transmit $GLD_i$ local model to the group coordinator. NOMA enables multiple users to communicate over the same time and frequency resources by grouping users based on channel gain and employing power domain multiplexing. This efficient resource utilization supports a greater number of devices to participate simultaneously in FL, thereby enhancing training efficiency \cite{9036885}. All $\{GLD_i\}_{i\in \mathcal{M}}$ form a NOMA cluster to send their local model to the group coordinator simultaneously. Due to the nature of the uplink NOMA, the decoding order can be artificially determined. For convenience, we assume that the successive interference cancellation order is in reverse order of the $\{GLD_i\}_{i\in \mathcal{M}}$ indices, i.e., $\{I, I-1, \cdots,1\}$. Let $L$ and $t^{\rm up}_{\rm G}$ denote the size of the $\{GLD_i\}_{i\in \mathcal{M}}$ local model data and the model transmission latency, respectively. To complete the transmission of $L$ in $t^{\rm up}_{\rm G}$, the transmission power of $GLD_i$ in the NOMA transmission \cite{9036885} can be expressed as

\begin{equation}\label{eq7}
p_i^{\rm G}=\frac{n_{\rm C}}{g_{i{\rm C}}}(2^{\frac{L}{Wt^{\rm up}_{\rm G}}}-1)2^{\frac{L}{Wt^{\rm up}_{\rm G}}}, \forall i\in \mathcal{M},
\end{equation}
where $n_{\rm C}$ is the background noise at the group coordinator, $W$ is the bandwidth, and $g_{i{\rm C}}$ is the channel power gain from $GLD_i$ to the group coordinator.

\par Therefore, the energy consumption for $GLD_i$ to upload its local model to the group coordinator is given by
\begin{equation}\label{eq8}
E_i^{\rm up}=p_i^{\rm G}t^{\rm up}_{\rm G}, \forall i\in \mathcal{M}.
\end{equation}

\par At the stage where $BLD_j$ uploads update information to the group coordinator, there is a malicious attacker who is eavesdropping or tampering with the data transmission of $BLD_j$. In order to guarantee the security of the transmission, each $GLD_i$ transmits jamming signal with the objective of enhancing the secrecy
throughput of $BLD_j$ for group coordinator transmission. In this work, we assume that all $GLD_i$ are equipped with full-duplex antennas, which facilitate the transmission of jamming signals by each $GLD_i$ while local models are uploaded \cite{9860543}. In particular, the jamming signals $\{q_i\}_{i\in\mathcal{M}}$ are applied to $BLD_j$, and the secrecy rate from $BLD_j$ to the group coordinator can be expressed as

\begin{equation}\label{eq9}
\begin{aligned}
R_j^{\rm sec}=&W[\log_{2}{(1+\frac{p^{\rm B}_jh_{j{\rm C}}}{n_j})}-\\&\log_{2}{(1+\frac{p^{\rm B}_jh_{j{\rm E}}}{n_{\rm E}+\sum_{i\in \mathcal{M}}q_ig_{i{\rm E}}})}]^+, \forall i\in \mathcal{M}, \forall j\in \mathcal{N},
\end{aligned}
\end{equation}
where $g_{i{\rm E}}$ and $n_{\rm E}$ denote the channel power gain from $GLD_i$ to the eavesdropper and the background noise of the eavesdropper, respectively. The notation $[x]^+$ represents $\max\{0, x\}$. 
We use variable $t^{\rm up}_{\rm B}$ to denote the duration of the $\{BLD_j\}_{j \in\mathcal{N}}$ uploading transmission. Correspondingly, each $GLD_i$ energy consumption for its jamming signal can be expressed as
\begin{equation}\label{eq10}
E^{\rm jam}_i=q_it^{\rm up}_{\rm B}, \forall j\in \mathcal{N}.
\end{equation}

\subsubsection{Average Aggregation and Model Upload for Group Coordinators}
Upon receiving local model data from all $GLD_i$ and DTs, the group coordinator aggregates all local models in order to update the global model. In particular, it is assumed that the group coordinator employs a fixed CPU rate and a fixed upload rate. Consequently, the aggregation and upload latency of the group coordinator model is fixed and represented by $T^{\rm agg}_{\rm C}$ and $T^{\rm up}_{\rm C}$. The group coordinator employs the aggregation method FedAVG to derive local/global model \cite{9850408}.

\subsubsection{Blockchain-Based Global Aggregation for Main Coordinator}
During this phase, the group coordinator acts as a participant, with each global model generated by the group coordinator serving as a local model, thus referred to as the local/global model. Validators chosen from the fog devices utilize the shared global model to verify the local/global models uploaded by the group coordinator, ensuring that the final global model used for local training originates from the broadcasts of the main coordinator, and generates blocks for the blockchain. The final global model is obtained by the main coordinator through deep learning techniques. The abundance of computing resources available at the main coordinator, situated in the cloud tier, enables the rapid completion of the final global model calculation. Therefore, we assume that the total time required for generating the final global model and blockchain transaction model data is a negligible positive number $T^{\rm MC}$ in this article \cite{10416386}.
\par The proposed framework of DT-assisted FL with blockchain is summarized in Algorithm \ref{alg:FL}.

\begin{algorithm}[!h]
    \caption{DT-Assisted FL with Blockchain}
    \label{alg:FL}
    \renewcommand{\algorithmicrequire}{\textbf{Input:}}
    \renewcommand{\algorithmicensure}{\textbf{Output:}}
    
    \begin{algorithmic}[1]
        \REQUIRE The training process involves the participation of local devices $\{LD_1, LD_2,\cdots, LD_k\}$ in each cluster and the group coordinator $\{C_s\}_{s\in \mathcal{S}}$.

%%input
        \ENSURE Final global model.   %%output
    \FOR{each iteration $t=0, 1, 2,\cdots$}
        \FOR{each participating group coordinator $\{C_s\}_{s\in \mathcal{S}}$}
            \STATE Main Coordinator computes reputation score $Score^{\rm Rep}_{C_s}$ according to (\ref{eq2}).
            \STATE Main coordinator selects highest scoring group coordinator as validator $V$.
        \ENDFOR
        \FOR{each cluster $s=0, 1, 2,\cdots, \mathcal{S}$}
            \STATE According to (\ref{eq4}), classify local devices.
            \STATE The DT of $BLD_j$ generated by the group coordinator according to (\ref{eq1}).
            \STATE $\{GLD_i\}_{i\in \mathcal{M}}$ conducts local training and subsequently uploads the local model to the group coordinator. Concurrently, $\{GLD_i\}_{i\in \mathcal{M}}$ transmits cooperative jamming signals to $\{BLD_j\}_{j\in \mathcal{N}}$, thereby guaranteeing the secure upload of DTs synchronisation data by $\{BLD_j\}_{j\in \mathcal{N}}$. Subsequently, DTs initiate local training.
            \STATE The group coordinator executes average aggregation through FedAVG to obtain the local/global model and broadcasts it to validator $V$.
        \ENDFOR
        \STATE Verify the local/global model with validator $V$ and create new block for the transactions.
        \STATE Add the block to the blockchain.
        \STATE Perform deep learning on verified models with main coordinator and update the final global model.
    \ENDFOR
    \RETURN Final global model
    \end{algorithmic}
\end{algorithm}
\subsection{Problem Formulation}
Based on our modeling in the previous section, the total delay of the validator receiving local/global models in each round of FL iteration can be expressed as follows:

\begin{equation}\label{eq11}
T=\mathop{\max}\{T^{\rm GLD}, T^{\rm BLD}\}+T^{\rm agg}_{\rm C}+T^{\rm up}_{\rm C}+T^{\rm MC},
\end{equation}
\begin{equation}\label{eq12}
T^{\rm GLD}=\mathop{\max}_{i\in \mathcal{M}}\{t_i^{\rm loc}\}+t^{\rm up}_{\rm G},
\end{equation}
\begin{equation}\label{eq13}
T^{\rm BLD}=t^{\rm up}_{\rm B}+t^{\rm loc}_{\rm B}.
\end{equation}

The objective is to accelerate the FL process of the group coordinator by minimising the total delay $T$. This is to be achieved by jointly optimising the local computing rate $v_i$ and the jamming transmit powers $q_i$ of $GLD_i$, as well as the upload time $t^{\rm up}_{\rm G}$ of local model data and the DT update data time $t^{\rm up}_{\rm B}$ corresponding to $\{BLD_j\}_{j\in \mathcal{N}}$. In light of the aforementioned considerations, we have established the following delay, which can be expressed as

\begin{subequations}\label{eq14}
\begin{align}
\text{\textbf{P}:} \min T  \ \ \tag{\ref{eq14}}
\end{align} 
\begin{alignat}{2}
\text{s.t. } & E_i^{\rm loc}+E_i^{\rm up}+E_i^{\rm jam}\leq E_i^{\rm max}, \forall i\in \mathcal{M},\label{eq14a}   \\
     &0\leq v_i \leq V_i^{\rm max}, \forall i\in \mathcal{M},\label{eq14b}\\
     &0\leq p_i^{\rm G} \leq P_i^{\rm max}, \forall i\in \mathcal{M},\label{eq14c}\\
     &0\leq q_i \leq Q_i^{\rm max}, \forall i\in \mathcal{M}.\label{eq14d}
\end{alignat}
\qquad\; \text{variables:} $v_i$, $q_i$, $t^{\rm up}_{\rm G}>0$, $t^{\rm up}_{\rm B}>0$, $\forall i\in \mathcal{M}$.
\end{subequations}
\par Constraint (\ref{eq14a}) ensures that each $GLD_i$’s total energy consumption cannot exceed its energy capacity $E_i^{\rm max}$. Constraint (\ref{eq14b}) means that each $GLD_i$’s CPU rate cannot exceed its maximum rate $V_i^{\rm max}$. Constraint (\ref{eq14c}) means that each $GLD_i$’s uploading power $p_i^{\rm G}$ cannot exceed its maximum transmit power $P_i^{\rm max}$. Constraint (\ref{eq14d}) means that the energy consumption of the jamming power cannot exceed its energy capacity $Q_i^{\rm max}$.

\section{Delay Minimized FL Schemes}
Since the objective function is non-convex, the problem \textbf{P} is a non-convex optimization problem. Consequently, in order to facilitate the solution, we decompose problem \textbf{P}.
\subsection{Auxiliary Variables and Equivalent Transformations}
Let $y$ be the upper bound of all $GLD_i$s’ local training latency $\{t_i^{\rm loc}\}_{i\in\mathcal{M}}$, i.e.,

\begin{equation}\label{eq15}
     \mathop{\max}_{i\in\mathcal{M}} \{t_i^{\rm loc}\}\leq y.
\end{equation}

\par Based on (\ref{eq4}) and (\ref{eq15}), we can derive the lower bound of $GLD_i$’s local processing rate as
\begin{equation}\label{eq16}
     \frac{\eta\iota D_{i,\rm G}}{y}\leq v_i, \forall i\in \mathcal{M}.
\end{equation}

\par According to (\ref{eq5}), $GLD_i$’s energy consumption for its local model training increases with the increase of $v_i$. Thus, to minimize the energy consumption of $GLD_i$ for its local model training and meet the constraints, $v_i$ needs to satisfy the following condition
\begin{equation}\label{eq17}
     v_i=\frac{\eta\iota D_{i,\rm G}}{y}\leq V_i^{\rm max}, \forall i\in \mathcal{M}.
\end{equation}

\par Therefore, (\ref{eq16}) is strictly active while satisfying constraint (\ref{eq14b}) before.
Moreover, $LD_i$’s energy consumption for its local model training can be expressed as
\begin{equation}\label{eq18}
     E_i^{\rm loc}=\frac{\kappa\eta^3\iota^3 D_{i,\rm G}^3}{y^2}, \forall i\in \mathcal{M}.
\end{equation}

\par Note that constraint (\ref{eq16}) leads to an equivalent lower bound on $y$, i.e.,
\begin{equation}\label{eq19}
     \mathop{\max}_{i\in\mathcal{M}} \{\frac{\eta\iota D_{i,\rm G}}{V_i^{\rm max}}\}\leq y.
\end{equation}

\par Furthermore, we introduce another variable $Q$, to denote the aggregate jamming effect perceived by the eavesdropper. This is written as
\begin{equation}\label{eq20}
Q=\sum_{i\in \mathcal{M}}q_ig_{i{\rm E}}.
\end{equation}
\par By substituting of $Q$ into (\ref{eq9}), we can derive the secure rate of the $BLD_j$ to group coordinator as

\begin{equation}\label{eq21}
\begin{aligned}
R_j^{\rm sec}=&W[\log_{2}{(1+\frac{p^{\rm B}_jh_{j{\rm C}}}{n_j})}-\\&\log_{2}{(1+\frac{p^{\rm B}_jh_{j{\rm E}}}{n_{\rm E}+Q})}]^+, \forall j\in \mathcal{N}.
\end{aligned}
\end{equation}
\par Notice that to ensure a non-zero $R_j^{\rm sec}$ for each $BLD_j$, $Q$ should satisfy the following constraint
\begin{equation}\label{eq22}
\mathop{\max}_{j\in\mathcal{N}} \{\frac{n_jh_{j{\rm E}}}{h_{j{\rm C}}}\}-n_{\rm E}<Q.
\end{equation}
\par By using (\ref{eq21}) and (\ref{eq22}), we can derive the lower bound of $t^{\rm up}_{\rm B}$ under the given $Q$ as

\begin{equation}\label{eq23}
\begin{aligned}
\widehat{t}^{\rm up}_{\rm B}&=\mathop{\max}_{j\in\mathcal{N}}\{\frac{D_{j,\rm B}}{R^{\rm sec}_j}\}\\&=\mathop{\max}_{j\in\mathcal{N}}\{\frac{D_{j,\rm B}}{W[\log_{2}{(1+\frac{p^{\rm B}_jh_{j{\rm C}}}{n_j})}-\log_{2}{(1+\frac{p^{\rm B}_jh_{j{\rm E}}}{n_{\rm E}+Q})}]}\}.
\end{aligned}
\end{equation}

\subsection{Problem Transformation}

In problem \textbf{P}, there is variable coupling between $T^{\rm GLD}$ and $T^{\rm BLD}$, with $T^{\rm BLD}$ being constrained by variable $Q$. To this end, we decompose problem \textbf{P}, fix $T^{\rm BLD}$ first, solve for the optimal solution of $T^{\rm GLD}$, and then perform a linear search to obtain the optimal $\min\max\{T^{\rm GLD}, T^{\rm BLD}\}$. The problem \textbf{P} can be transformed as follows:

\begin{subequations}\label{eq24}
\begin{align}
\text{\textbf{P}-GLD: } \widehat{T}^{\rm GLD}=\min y+t^{\rm up}_{\rm G}+T^{\rm agg}_{\rm C}+T^{\rm up}_{\rm C}  \ \ \tag{\ref{eq24}}
\end{align} 
\begin{alignat}{2}
\text{s.t. } & \frac{\kappa\eta^3\iota^3 D_{i,\rm G}^3}{y^2}+\frac{n_{\rm C}}{g_{i{\rm C}}}(2^{\frac{L}{Wt^{\rm up}_{\rm G}}}-1)2^{\frac{(i-1)L}{Wt^{\rm up}_{\rm G}}}t^{\rm up}_{\rm G}\notag
\\ &+\widehat{t}^{\rm up}_{\rm B}q_i\leq E_i^{\rm max}, \forall i\in \mathcal{M},\label{eq24a}   \\
     &Q=\sum_{i\in \mathcal{M}}q_ig_{i{\rm E}},\label{eq24b}\\
     &\underline{t}^{\rm up}_{\rm G} \leq t^{\rm up}_{\rm G}=\{{\rm argmin}\;t^{\rm up}_{\rm G}|\frac{n_{\rm C}}{g_{i{\rm C}}}(2^{\frac{L}{Wt^{\rm up}_{\rm G}}}-1) \notag
     \\ & \times2^{\frac{L}{Wt^{\rm up}_{\rm G}}}\leq P_i^{\rm max}, \forall i\in \mathcal{M}\},\label{eq24c}\\
     &0<q_i \leq Q_i^{\rm max}, \forall i\in \mathcal{M},\label{eq24d}
\end{alignat}
\qquad\; \text{variables:} $y$, $q_i$, $t^{\rm up}_{\rm G}>0$, $\forall i\in \mathcal{M}$.
\end{subequations}

\par In problem \textbf{P}-GLD, constraint (\ref{eq24a}) is derived from (\ref{eq14a}) before by exploiting $\widehat{t}^{\rm up}_{\rm B}$ in (\ref{eq23}) under the given $Q$. Constraint (\ref{eq24b}) comes from the definition of $Q$ in (\ref{eq20}). Constraint (\ref{eq24c}) comes from constraint (\ref{eq14a}). According to (\ref{eq7}), exploiting the decreasing feature of $p_i^{\rm G}$ with respect to $t^{\rm up}_{\rm G}$, constraint (\ref{eq14c}) leads to an equivalent lower-bound on $t^{\rm up}_{\rm G}$ as shown in constraint (\ref{eq24c}).

\begin{proposition}
Problem \textbf{P}-GLD is a convex optimization problem with respect to $y$, $t^{\rm up}_{\rm G}$ and $\{q_i\}_{i\in \mathcal{M}}$.
\end{proposition}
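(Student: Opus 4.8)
The plan is to verify the two hallmarks of a convex program: convexity of the objective and convexity of the feasible region. The objective $y+t^{\rm up}_{\rm G}+T^{\rm agg}_{\rm C}+T^{\rm up}_{\rm C}$ is affine in the decision variables $(y,t^{\rm up}_{\rm G},\{q_i\}_{i\in\mathcal{M}})$, since $T^{\rm agg}_{\rm C}$ and $T^{\rm up}_{\rm C}$ are fixed, and is therefore convex. Among the constraints, (\ref{eq24b}) is a linear equality in $\{q_i\}$ once $Q$ is held fixed by the outer search, (\ref{eq24c}) reduces to the lower bound $t^{\rm up}_{\rm G}\ge\underline{t}^{\rm up}_{\rm G}$ with a constant right-hand side, and (\ref{eq24d}) is a box on $q_i$; all three define affine, hence convex, sets. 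The substance of the argument thus collapses onto the energy budget (\ref{eq24a}).

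Here I would exploit the fact that the three summands on the left-hand side of (\ref{eq24a}) act on disjoint variables, so that joint convexity follows from convexity of each term in its own argument (a sum of convex functions of separate variables is jointly convex). The local-training term $\kappa\eta^3\iota^3 D_{i,\rm G}^3/y^2$ is a positive constant times $y^{-2}$, whose second derivative is positive for $y>0$; and the jamming term $\widehat{t}^{\rm up}_{\rm B}q_i$ is linear in $q_i$, because $\widehat{t}^{\rm up}_{\rm B}$ is a constant under the fixed $Q$ supplied by the search. Both are therefore convex, and the difficulty is isolated in the NOMA uploading-energy term.

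The hard part is establishing convexity of the uploading energy $\tfrac{n_{\rm C}}{g_{i{\rm C}}}(2^{L/(Wt^{\rm up}_{\rm G})}-1)2^{(i-1)L/(Wt^{\rm up}_{\rm G})}t^{\rm up}_{\rm G}$ in $t^{\rm up}_{\rm G}$. Writing $t=t^{\rm up}_{\rm G}$, this expands to $\tfrac{n_{\rm C}}{g_{i{\rm C}}}\,t\,(2^{iL/(Wt)}-2^{(i-1)L/(Wt)})$, a difference $g_i(t)-g_{i-1}(t)$ of functions $g_k(t)=t\,2^{kL/(Wt)}$. Each $g_k$ is individually convex — setting $b_k=kL\ln 2/W\ge 0$ gives $g_k(t)=t\,e^{b_k/t}$ with $g_k''(t)=(b_k^2/t^3)e^{b_k/t}\ge 0$ — but a difference of convex functions need not be convex, so convexity is not automatic and this is where care is required. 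The resolution is the monotonicity $g_i''(t)\ge g_{i-1}''(t)$: since $b_i>b_{i-1}\ge 0$, both factors $b_k^2$ and $e^{b_k/t}$ increase with $b_k$, whence $b_i^2e^{b_i/t}\ge b_{i-1}^2e^{b_{i-1}/t}$ and the second derivative of the difference is nonnegative for every $t>0$. With each summand convex in its own variable, the left-hand side of (\ref{eq24a}) is jointly convex, its sublevel set is convex, and together with the affine/box constraints and the affine objective this proves that \textbf{P}-GLD is a convex optimization problem.
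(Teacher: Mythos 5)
Your proposal is correct, and its overall skeleton matches the paper's: affine objective, affine/box constraints (\ref{eq24b})--(\ref{eq24d}), and term-by-term convexity of the separable left-hand side of (\ref{eq24a}), with the $y^{-2}$ term and the linear jamming term dispatched exactly as in Appendix A. Where you genuinely diverge is on the key step, convexity of the NOMA uploading energy in $t^{\rm up}_{\rm G}$. The paper attacks the product form $(2^{L/(Wt)}-1)2^{(i-1)L/(Wt)}t$ head-on, computing its second derivative and exhibiting a bracketed expression claimed to be positive (equation (\ref{eq38})); the resulting formula is messy, and indeed the paper's stated $f_i$ in (\ref{eq37}) does not even match the term in (\ref{eq24a}) that it is supposed to represent, so the reader must take the algebra partly on faith. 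You instead rewrite the energy as a difference $g_i(t)-g_{i-1}(t)$ with $g_k(t)=t\,e^{b_k/t}$, observe the clean identity $g_k''(t)=(b_k^2/t^3)e^{b_k/t}$, and close the gap that a difference of convex functions need not be convex by the monotonicity $b_i^2e^{b_i/t}\ge b_{i-1}^2e^{b_{i-1}/t}$ for $b_i>b_{i-1}\ge 0$. This buys transparency and rigor: your route gives the compact, manifestly nonnegative expression
\begin{equation*}
\frac{n_{\rm C}L^2(\ln 2)^2}{g_{i{\rm C}}W^2t^3}\Bigl[i^2\,2^{\frac{iL}{Wt}}-(i-1)^2\,2^{\frac{(i-1)L}{Wt}}\Bigr],
\end{equation*}
of which the paper's bracket is a (partially garbled) expansion, and it makes explicit the difference-of-convex subtlety that the paper silently steps over. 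The only caution is that your argument relies on $\widehat{t}^{\rm up}_{\rm B}$ being constant, i.e.\ on $Q$ being fixed by the outer linear search — the same implicit assumption the paper makes — so the two proofs establish convexity in exactly the same regime.
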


\begin{proof}
Please refer to Appendix A.$\hfill\blacksquare$ 
\end{proof}

\subsection{Optimized Solution for Transformed Problem \textbf{P}-GLD}
Based on convexity of \textbf{P}-GLD, we adopt the Karush-Kuhn-Tucker (KKT) conditions to obtain the optimal solutions \cite{5447064}. An important feature of problem \textbf{P}-GLD is that only constraint (\ref{eq24a}) couples all the optimization variables. Thus, we introduce a set of dual variables $\boldsymbol{\lambda}=\{\lambda_i\}_{i\in\mathcal{M}}$ to relax constraint (\ref{eq24a}). With $\boldsymbol{\lambda}$, we can express the Lagrangian function as follows:

\begin{equation}\label{eq25}
\begin{aligned}
     \mathcal{L}(y, t^{\rm up}_{\rm G}, \{q_i\}_{i\in\mathcal{M}}, \boldsymbol{\lambda})&=y+t^{\rm up}_{\rm G}+
\sum_{i\in\mathcal{M}}\lambda_i[\frac{\kappa\eta^3\iota^3 D_{i,\rm G}^3}{y^2}\\&+\frac{n_{\rm C}}{g_{i{\rm C}}}(2^{\frac{L}{Wt^{\rm up}_{\rm G}}}-1)2^{\frac{L}{Wt^{\rm up}_{\rm G}}}t^{\rm up}_{\rm G}
\\ &+\widehat{t}^{\rm up}_{\rm B}q_i-E_i^{\rm max}].
\end{aligned}
\end{equation}

\par With (\ref{eq25}) and given $\boldsymbol{\lambda}$, the primal problem of problem \textbf{P}-GLD can be expressed as

\begin{subequations}\label{eq26}
\begin{align}
\text{(\textbf{P}-GLD-Primal): } \min\mathcal{L}(y, t^{\rm up}_{\rm G}, \{q_i\}_{i\in\mathcal{M}}, \boldsymbol{\lambda}) \ \ \tag{\ref{eq26}}
\end{align} 
\begin{alignat}{2}
\text{s.t. } \text{Constraint (\ref{eq24b})-(\ref{eq24d}) and (\ref{eq19}).} \notag
\end{alignat}
\end{subequations}
\par As the variables in problem \textbf{P}-GLD-Primal exhibit a linear structure, we can decompose this problem into three sub-problems that separately optimize $y$, $t^{\rm up}_{\rm G}$ and $\{q_i\}_{i\in\mathcal{M}}$ as follows:

\subsubsection{Sub-Problem to Optimize $y$ (Sub-Y)}
Given $\{\lambda_i\}_{i\in\mathcal{M}}$, we optimize $y$ according to the following single-variable optimization problem.
\begin{subequations}\label{eq27}
\begin{align}
\text{(Sub-Y): } \min f_Y(y)=y+\sum_{i\in\mathcal{M}}\frac{\lambda_i\kappa\eta^3\iota^3 D_{i,\rm G}^3}{y^2} \ \ \tag{\ref{eq27}}
\end{align} 
\begin{alignat}{2}
\text{s.t. } \text{Constraint (\ref{eq19}).} \notag
\end{alignat}
\end{subequations}
\par In the context of Sub-Y, the first-order derivative of the objective function with respect to $y$ is represented by $\frac{\partial f_Y(y)}{\partial {y}}=1-2\sum_{i\in\mathcal{N}}\frac{\lambda_i\kappa\eta^3\iota^3 D_{i,\rm G}^3}{y^3}$. The optimized value for $y$ can be obtained, which can be expressed as follows:

\begin{equation}\label{eq28}
    y=(2\sum_{i\in\mathcal{N}}\lambda_i\kappa\eta^3\iota^3 D_{i,\rm G}^3)^\frac{1}{3}.
\end{equation}
\par Considering constraint (\ref{eq19}), the optimal solution is given by
\begin{equation}\label{eq29}
    y^*=\max \{(2\sum_{i\in\mathcal{N}}\lambda_i\kappa\eta^3\iota^3 D_{i,\rm G}^3)^\frac{1}{3}, \mathop{\max}_{i\in\mathcal{N}} \{\frac{\eta\iota D_{i,\rm G}}{V_i^{\rm max}}\}\}.
\end{equation}

\subsubsection{Sub-Problem to Optimize $t^{\rm up}_{\rm G}$ (Sub-T)}
Given $\{\lambda_i\}_{i\in\mathcal{M}}$, we optimize variable $t^{\rm up}_{\rm G}$ according to the single-variable optimization problem expressed as follows:
\begin{subequations}\label{eq30}
\begin{align}
\text{(Sub-T): } &\min f_T(t^{\rm up}_{\rm G})=t^{\rm up}_{\rm G}+\notag\\&\sum_{i\in\mathcal{M}}\frac{\lambda_in_{\rm C}}{g_{i{\rm C}}}(2^{\frac{L}{Wt^{\rm up}_{\rm G}}}-1)2^{\frac{(i-1)L}{Wt^{\rm up}_{\rm G}}}t^{\rm up}_{\rm G} \ \ \tag{\ref{eq30}}
\end{align} 
\begin{alignat}{2}
\text{s.t. } \text{Constraint (\ref{eq24c}).} \notag
\end{alignat}
\end{subequations}

\par With (\ref{eq38}), we can derive that the objective function $f_T(t^{\rm up}_{\rm G})$ is convex, and the first order derivative of the objective function with respect to $t^{\rm up}_{\rm G}$ is represented as follows:
\begin{equation}\label{eq31}
\begin{aligned}
    \frac{\partial f_T(t^{\rm up}_{\rm G})}{\partial {t^{\rm up}_{\rm G}}}&=1+\sum_{i\in\mathcal{M}}\frac{\lambda_in_{\rm C}2^{\frac{(i-1)L}{Wt^{\rm up}_{\rm G}}}}{g_{i{\rm C}}}[(2^{\frac{L}{Wt^{\rm up}_{\rm G}}}-1)\\&-(2^{\frac{L}{Wt^{\rm up}_{\rm G}}}\frac{L}{Wt^{\rm up}_{\rm G}}+(2^{\frac{L}{Wt^{\rm up}_{\rm G}}}-1)\frac{(i-1)L}{Wt^{\rm up}_{\rm G}})\ln{2}].
\end{aligned}
\end{equation}
\par Different from Sub-Y, we cannot analytically derive the solution for $\frac{\partial f_T(t^{\rm up}_{\rm G})}{\partial {t^{\rm up}_{\rm G}}}=0$ in (\ref{eq31}). However, by exploiting the convexity of $f_T(t^{\rm up}_{\rm G})$, we can use the bisection search \cite{10129092} to find $\widehat{t}^{\rm up}_{\rm G}=\{t^{\rm up}_{\rm G}|\frac{\partial f_T(t^{\rm up}_{\rm G})}{\partial {t^{\rm up}_{\rm G}}}=0\}$. Combined with constraint (\ref{eq24c}), we obtain the optimal solution of $t^{\rm up}_{\rm G}$ as follows:

\begin{equation}\label{eq32}
    t^{\rm up*}_{\rm G}=\max \{\underline{t}^{\rm up}_{\rm G}, \widehat{t}^{\rm up}_{\rm G}\}.
\end{equation}

\subsubsection{Sub-Problem to Optimize $\{q_i\}_{i\in \mathcal{M}}$ (Sub-Q)}
Given $Q$, Sub-Q can be formulated as a linear programming problem. Consequently, the Lagrange multiplier method is employed to solve $\{q_i\}_{i\in \mathcal{M}}$. A dual variable $\mu$ is introduced for constraint (\ref{eq24b}). Therefore, Sub-Q can be relaxed as follows:
\begin{subequations}\label{eq33}
\begin{align}
\text{(Sub-Q): } \sum_{i\in\mathcal{M}}\lambda_i\widehat{t}^{\rm up}_{\rm B}q_i-\mu(Q-\sum_{i\in \mathcal{M}}q_ig_{i{\rm E}}) \ \ \tag{\ref{eq33}}
\end{align} 
\begin{alignat}{2}
\text{s.t. } \text{Constraint (\ref{eq24d}).} \notag
\end{alignat}
\end{subequations}
\par Due to $\sum_{i\in\mathcal{M}}\lambda_i\widehat{t}^{\rm up}_{\rm B}q_i-\mu(Q-\sum_{i\in \mathcal{M}}q_ig_{i{\rm E}})$ can be rearranged as 
$\sum_{i\in\mathcal{M}}(\lambda_i\widehat{t}^{\rm up}_{\rm B}-\mu g_{i{\rm E}})q_i+\mu Q$, by first setting the value of $\mu$, then the corresponding solution for $\{q_i\}_{i\in \mathcal{M}}$ can be obtained and is expressed as follows:

\begin{equation}\label{eq34}
q_i=
\begin{cases}
0, & \lambda_i\widehat{t}^{\rm up}_{\rm B}-\mu g_{i{\rm E}}>0 \\
{\rm any\ value}\in [0, Q^{\rm max}_i], & \lambda_i\widehat{t}^{\rm up}_{\rm B}-\mu g_{i{\rm E}}=0\\
Q_i^{\rm max}, & \lambda_i\widehat{t}^{\rm up}_{\rm B}-\mu g_{i{\rm E}}<0
\end{cases}.
\end{equation}

\par In accordance with the results of (\ref{eq34}), the optimal value of $\mu$ can be identified within the set $\{\frac{\lambda_i\widehat{t}^{\rm up}_{\rm B}}{g_{i{\rm E}}}\}_{i\in \mathcal{M}}$. Consequently, the value of $q_i$ (assuming $q_i=\frac{\lambda_i\widehat{t}^{\rm up}_{\rm B}}{g_{i{\rm E}}}\}$) can be adjusted to satisfy the constraints set forth in (\ref{eq24b}). In order to achieve a more accurate representation, the values of $\{\frac{\lambda_i\widehat{t}^{\rm up}_{\rm B}}{g_{i{\rm E}}}\}_{i\in \mathcal{M}}$ are sorted in descending order, and a mapping $map(i)$ is introduced to represent the index of the sorted $GLD_i$. In this context, the value of $\{\frac{\lambda_i\widehat{t}^{\rm up}_{\rm B}}{g_{i{\rm E}}}\}_{i\in \mathcal{M}}$ is defined as the $map(i)$-th largest value after reordering. Therefore, we can obtain the following proposition.

\begin{proposition}
There is a unique value of $\widehat{i}$, and that the optimal solution of Sub-Q can be expressed as follows:

\begin{equation}\label{eq35}
q_i=
\begin{cases}
0, & map(i)\leq map(\widehat{i})-1 \\
\frac{Q-\sum_{map(r)\ge map(\widehat{i}+1)}q_rg_{r{\rm E}}}{g_{\widehat{i}{\rm E}}}, & i=\widehat{i}\\
Q_i^{\rm max}, & map(i)\ge map(\widehat{i})+1
\end{cases}.
\end{equation}

\end{proposition}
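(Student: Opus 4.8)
The plan is to treat Sub-Q as a linear program in the variables $\{q_i\}_{i\in\mathcal{M}}$: minimize the linear cost $\sum_{i\in\mathcal{M}}\lambda_i\widehat{t}^{\rm up}_{\rm B}q_i$ subject to the single equality (budget) constraint $\sum_{i\in\mathcal{M}}q_ig_{i{\rm E}}=Q$ together with the box constraints $0\le q_i\le Q_i^{\rm max}$. Since the objective and all constraints are linear, an optimal solution is attained at a vertex of the feasible polytope; with only one equality constraint beyond the box bounds, such a vertex must activate $|\mathcal{M}|-1$ of the box constraints, so at most one coordinate is strictly interior while every other sits at either $0$ or $Q_i^{\rm max}$. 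This already motivates the three-case form of (\ref{eq35}); the remaining work is to decide which coordinates take which bound and to pin down the single fractional index $\widehat{i}$.

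First I would establish the ordering principle through an exchange argument. The quantity $\frac{\lambda_i\widehat{t}^{\rm up}_{\rm B}}{g_{i{\rm E}}}$ is the marginal cost of producing one unit of aggregate jamming effect through $GLD_i$. Suppose an optimal allocation contained a device $i$ with a smaller ratio that is not yet saturated ($q_i<Q_i^{\rm max}$) and a device $j$ with a larger ratio that is still active ($q_j>0$). Raising $q_i$ and lowering $q_j$ so that $\Delta q_i\,g_{i{\rm E}}=\Delta q_j\,g_{j{\rm E}}$ keeps the budget $\sum_{i\in\mathcal{M}}q_ig_{i{\rm E}}=Q$ fixed while changing the cost by $\Delta q_i\big(\lambda_i\widehat{t}^{\rm up}_{\rm B}-\lambda_j\widehat{t}^{\rm up}_{\rm B}\tfrac{g_{i{\rm E}}}{g_{j{\rm E}}}\big)<0$, contradicting optimality. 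Hence in the optimum every device strictly more efficient (smaller ratio) than the fractional one is saturated at $Q_i^{\rm max}$ and every strictly less efficient device is switched off at $0$, which is exactly the threshold structure recorded through the descending-order map $map(\cdot)$.

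Next I would reconcile this with the dual characterization already derived in (\ref{eq34}), where the multiplier $\mu$ of the budget constraint acts as the threshold: $q_i=0$ when $\frac{\lambda_i\widehat{t}^{\rm up}_{\rm B}}{g_{i{\rm E}}}>\mu$ and $q_i=Q_i^{\rm max}$ when $\frac{\lambda_i\widehat{t}^{\rm up}_{\rm B}}{g_{i{\rm E}}}<\mu$. Since a fractional coordinate can occur only at the equality case of (\ref{eq34}), the optimal $\mu$ must coincide with one of the values $\{\frac{\lambda_i\widehat{t}^{\rm up}_{\rm B}}{g_{i{\rm E}}}\}_{i\in\mathcal{M}}$; that distinguished index is precisely $\widehat{i}$, and its jamming level is then fixed by the residual budget $q_{\widehat{i}}=(Q-\sum_{map(r)\ge map(\widehat{i}+1)}q_rg_{r{\rm E}})/g_{\widehat{i}{\rm E}}$, reproducing (\ref{eq35}).

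Finally, for the uniqueness of $\widehat{i}$ I would order the devices by increasing ratio $\frac{\lambda_i\widehat{t}^{\rm up}_{\rm B}}{g_{i{\rm E}}}$, saturate them one at a time at $Q_i^{\rm max}$, and track the accumulated jamming effect $\sum_r q_rg_{r{\rm E}}$. Because feasibility places $Q$ strictly above its lower bound in (\ref{eq22}) and at or below the maximum achievable $\sum_{i\in\mathcal{M}}Q_i^{\rm max}g_{i{\rm E}}$, while every increment $Q_i^{\rm max}g_{i{\rm E}}$ is strictly positive (as $g_{i{\rm E}}>0$ and $Q_i^{\rm max}>0$), the accumulated effect is strictly increasing and crosses the target $Q$ at exactly one index; that index is $\widehat{i}$, with residual $q_{\widehat{i}}\in[0,Q_{\widehat{i}}^{\rm max}]$. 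I expect this last step to be the main obstacle: one must argue carefully that the strict monotonicity of the partial sums forbids two distinct fractional indices and cleanly handle the degenerate boundary case in which the residual meets a box bound exactly, so that the optimal allocation remains well defined and unique.
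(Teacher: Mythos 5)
Your proof is correct, and its core coincides with the paper's: both identify the optimal multiplier $\mu$ with one of the ratios $\{\frac{\lambda_i\widehat{t}^{\rm up}_{\rm B}}{g_{i{\rm E}}}\}_{i\in\mathcal{M}}$, read off the on/off structure from the sign condition in (\ref{eq34}), and pin the single fractional coordinate $q_{\widehat{i}}$ through the budget constraint (\ref{eq24b}). Where you differ is in the supporting arguments. First, you add a self-contained primal justification (LP vertex counting plus an exchange argument) showing that at most one coordinate can be interior and that the saturated devices must be exactly those with the smallest cost-per-unit-jamming ratios; the paper obtains this structure only as a consequence of the dual condition (\ref{eq34}) and (\ref{eq39}), so your route buys an independent check that does not presuppose strong duality for the relaxed problem (\ref{eq33}). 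Second, for uniqueness of $\widehat{i}$ you sweep the devices in ratio order and observe that the partial sums $\sum_r Q_r^{\rm max}g_{r{\rm E}}$ are strictly increasing, so they cross $Q$ exactly once; the paper proves the same fact by contradiction, computing in (\ref{eq40})--(\ref{eq43}) that any alternative index would force the residual outside $[0,Q_{\widehat{i}}^{\rm max}]$. These are the same monotonicity fact in constructive versus contrapositive dress, and both rest on $Q_i^{\rm max}g_{i{\rm E}}>0$. A final point in your favor: you explicitly flag the degenerate case where the residual lands exactly on $0$ or $Q_{\widehat{i}}^{\rm max}$; the paper's inequalities (\ref{eq42})--(\ref{eq43}) are weak, so strictly speaking two adjacent indices can then both satisfy the admissibility condition (\ref{eq40}), yielding the same physical allocation --- a tie the paper silently ignores but your argument would need (and promises) to resolve, e.g.\ by fixing the convention that $\widehat{i}$ is the smallest admissible index in the $map(\cdot)$ order.
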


\begin{proof}
Please refer to Appendix B. 
\end{proof}

\par In accordance with Proposition 2, the solution to Sub-Q can be obtained by enumerating the values of $map(i)$ in accordance with (\ref{eq35}). In particular, the value of $q_i=Q_i^{\rm max}$, $map(i)\ge map(\widehat{i})+1$, $\forall i\in \mathcal{M}$ is set until $\frac{Q-\sum_{map(r)\ge map(\widehat{i})+1}q_rg_{r{\rm E}}}{g_{\widehat{i}{\rm E}}}\leq Q_{\widehat{i}}^{\rm max}$ is reached. Subsequently, the value of $\widehat{i}$ is obtained, after which the value of $\{q_i\}_{i\in \mathcal{M}}$ is set in accordance with (\ref{eq35}). It should be noted that, in accordance with $T^{\rm GLD}>T^{\rm BLD}$, constraint (\ref{eq22}) and constraint (\ref{eq24d}), Sub-Q is always feasible.

\subsubsection{Dual variable $\boldsymbol{\lambda}$ update}
\par With the optimal solutions of $y^*$, $t^{\rm up*}_G$ and $\{q_i^*\}_{i\in \mathcal{M}}$, we can express the dual problem as $\max\mathcal{L}(y^*, t^{\rm up*}_G, \{q_i^*\}_{i\in \mathcal{M}}, \boldsymbol{\lambda})$. Then, we adopt the sub-gradient method \cite{4749425} for solving the dual problem, namely, we update $\lambda_i$ as follows:
\begin{equation}\label{eq36}
\begin{aligned}
\lambda_i&=[\lambda_i+\alpha_n(\frac{\kappa\eta^3\iota^3 D_{i,\rm G}^3}{(y^*)^2}\\&+\frac{n_{\rm C}}{g_{i{\rm C}}}(2^{\frac{L}{Wt^{\rm up*}_{\rm G}}}-1)2^{\frac{L}{Wt^{\rm up*}_{\rm G}}}t^{\rm up*}_{\rm G}
\\ &+\widehat{t}^{\rm up}_{\rm B}q^*_i-E_i^{\rm max})]^+, \forall i\in \mathcal{M}.
\end{aligned}
\end{equation}
\par $\alpha_n$ denotes the step size for updating the dual variables, and we adopt a decreasing step-size defined as $\alpha_n=\frac{\alpha_0}{\sqrt{n}}$ (where $\alpha_0$ is constant and $n$ is the iteration index). For convex optimization problems, this step-size can ensure convergence to the optimal solution. Such an iterative procedure continues until it converges in the variables $\{\lambda_i\}_{i\in\mathcal{M}}$.

\subsection{Optimized Solution for Original Problem \textbf{P}}

\par For each given $Q$, the optimal values of $y^*$, $t^{\rm up}_{\rm G}$, and $\{q_i^*\}_{i\in \mathcal{M}}$, as well as the corresponding values of $\widehat{T}^{\rm GLD}$, can be obtained by solving \textbf{P}-GLD. Nevertheless, the coupling effect between $T^{\rm GLD}$ and $T^{\rm BLD}$ is variable, as it is not possible to express their objective functions $\min\max\{T^{\rm GLD}, T^{\rm BLD}\}$ by analytical results. By leveraging the fundamental principles of univariate optimisation, a linear search can be conducted on $Q$ within the specified feasible interval, defined by the constraint $\mathop{\max}_{j\in\mathcal{N}} \{\frac{n_jh_{j{\rm E}}}{h_{j{\rm C}}}\}-n_{\rm E}<Q\leq \sum_{i \in \mathcal{M}}Q_i^{\rm max}g_{i{\rm E}}$. For each value of $Q$, the value of $\widehat{T}^{GLD}$ and $T^{\rm BLD}$ are calculated using (\ref{eq29}), (\ref{eq32}), (\ref{eq35}) and (\ref{eq13}). Ultimately, based on the calculated $\widehat{T}^{GLD}$ and $T^{\rm BLD}$ in (\ref{eq11}), $T$ should be calculated and the optimal $T^*$ obtained through an iterative process. Algorithm \ref{alg:P} offers a concise overview of the solution process for problem \textbf{P}.

\begin{algorithm}[!h]
    \caption{Proposed Algorithm for Problem (\textbf{P})}
    \label{alg:P}
    \renewcommand{\algorithmicrequire}{\textbf{Input:}}
    \renewcommand{\algorithmicensure}{\textbf{Output:}}
    
    \begin{algorithmic}[1]
        \REQUIRE Set $T^*=\infty$, $\theta_1$ as the step size of the linear search, $\theta_2$ as a small positive number and initializes the values $\{\lambda_i\}_{i\in \mathcal{M}}$.

%%input
        \ENSURE $T^*$, $t_{\rm B}^{\rm up, FL}$, $y^{\rm FL}$, $t_{\rm G}^{\rm up, FL}$ and $\{q_i^{\rm FL}\}_{i\in \mathcal{M}}$.   %%output
    \FOR{$Q=\mathop{\max}_{j\in\mathcal{N}} \{\frac{n_jh_{j{\rm E}}}{h_{j{\rm C}}}\}-n_{\rm E}:\theta_1:\sum_{i \in \mathcal{M}}Q_i^{\rm max}g_{i{\rm E}}$} 
        \REPEAT
        \STATE  Given $\{\lambda_i\}_{i\in\mathcal{N}}$, use (\ref{eq29}) and (\ref{eq32}) to compute $y^*$ and $t^{\rm up*}_{\rm G}$, respectively. 
        \STATE Given $\{\lambda_i\}_{i\in\mathcal{N}}$ and set the value of $q_i=Q_i^{\rm max}$, $map(i)\ge map(\widehat{i})+1$, $\forall i\in \mathcal{M}$ until the condition $\frac{Q-\sum_{map(r)\ge map(\widehat{i})+1}q_rg_{r{\rm E}}}{g_{\widehat{i}{\rm E}}}\leq Q_{\widehat{i}}^{\rm max}$ is met. Obtain the value of $\widehat{i}$, after which the value of $\{q_i^*\}_{i\in \mathcal{M}}$ is obtained from (\ref{eq35}).
        \STATE With $y^*$, $t^{\rm up*}_{\rm G}$ and $\{q_i^*\}_{i\in \mathcal{M}}$ update $\{\lambda_i\}_{i\in\mathcal{N}}$ according to (\ref{eq36}).
        \UNTIL $\mathop{\max}_{i\in\mathcal{M}} |\lambda_i-\lambda_i^*|<\theta_2$.
        \STATE Computes $\widehat{T}^{\rm GLD}$ and $T^{\rm BLD}$ according to (\ref{eq24}).
        \STATE Computes $T$ according to (\ref{eq11}).
        \IF{$T^*>T$}
            \STATE Set $T^*=T$, $t_{\rm B}^{\rm up, FL}=\widehat{t}_{\rm B}^{\rm up}$, $y^{\rm FL}=y^*$, $t_{\rm G}^{\rm up, FL}=t_{\rm G}^{\rm up*}$ and $\{q_i^{\rm FL}\}_{i\in \mathcal{M}}=\{q_i^*\}_{i\in \mathcal{M}}$.
        \ENDIF

    \ENDFOR
    \RETURN $T^*$, $t_{\rm B}^{\rm up, FL}$, $y^{\rm FL}$, $t_{\rm G}^{\rm up, FL}$ and $\{q_i^{\rm FL}\}_{i\in \mathcal{M}}$
    \end{algorithmic}
\end{algorithm}

\begin{table*}[!htbp]
\begin{center}
\caption{Parameter Configurations and Simulator Settings}
\renewcommand{\arraystretch}{1.25}
\begin{tabular}{c|l}
	\toprule
	\multicolumn{1}{p{3cm}}{\centering Parameters}
	&\multicolumn{1}{p{2.5cm}}{\centering Numerical values}\\
	\midrule
    Participating local devices & 6 GLDs and 4 BLDs\\
    $\{D_{1,\rm G}, D_{2,\rm G}, D_{3,\rm G}, D_{4,\rm G}, D_{5,\rm G}, D_{6,\rm G}\}$& $\{30, 45, 40, 50, 55, 35\}$ Mbits \\
    $\{g_{1{\rm C}}, g_{2{\rm C}}, g_{3{\rm C}}, g_{4{\rm C}}, g_{5{\rm C}}, g_{6{\rm C}}\}$& $\{2.3, 2.5, 2.4, 2.2, 2.7, 2.6\}\times 10^{-8}$  \\
    $\{g_{1{\rm E}}, g_{2{\rm E}}, g_{3{\rm E}}, g_{4{\rm E}}, g_{5{\rm E}}, g_{6{\rm E}}\}$& $\{1.6, 1.3, 1.7, 1.4, 1.2, 1.5\}\times 10^{-8}$ \\
    $\{D_{1,\rm B},D_{2,\rm B}, D_{3,\rm B}, D_{4,\rm B}\}$& $\{2.0, 3.5, 3.0, 2.5\}$ Mbits \\
    $\{h_{1{\rm C}}, h_{2{\rm C}}, h_{3{\rm C}}, h_{4{\rm C}}\}$& $\{1.0, 0.8, 1.1, 0.9\}\times 10^{-8}$\\
    $\{h_{1{\rm E}}, h_{2{\rm E}}, h_{3{\rm E}}, h_{4{\rm E}}\}$& $\{0.95, 0.85, 1.15, 1.05\}\times 10^{-9}$ \\
    $\{P_1^{\rm max}, P_2^{\rm max}, P_3^{\rm max}, P_4^{\rm max}, P_5^{\rm max}, P_6^{\rm max}\}$& $\{1.9, 2.1, 1.8, 2.0, 2.2, 1.7\}$ W \\
    $\{Q_1^{\rm max}, Q_2^{\rm max}, Q_3^{\rm max}, Q_4^{\rm max}, Q_5^{\rm max}, Q_6^{\rm max}\}$ & $\{1.1, 0.8, 0.9, 1.2, 0.7, 1.0\}$ W \\
    $\{V_1^{\rm max}, V_2^{\rm max}, V_3^{\rm max}, V_4^{\rm max}, V_5^{\rm max}, V_6^{\rm max}\}$& $\{1.0, 1.2, 1.4, 1.6, 1.8, 2.0, 2.2\}\times 10^{9}$ Hz\\
    $\{E_1^{\rm max}, E_2^{\rm max}, E_3^{\rm max}, E_4^{\rm max}, E_5^{\rm max}, E_6^{\rm max}\}$ & $\{3.8, 4.0, 3.4, 3.6, 3.8, 3.2\}$ J\\
    $\{p^{\rm B}_1, p^{\rm B}_2, p^{\rm B}_3, p^{\rm B}_4\}$ & $\{1.6, 1.4, 1.3, 1.5\}$ W \\
    Dataset & MNIST\\
    Training model& CNN\\
    \bottomrule
\end{tabular}\label{tb:notation}
\end{center}
\end{table*}

\section{Performance Evaluation}
This section presents the simulation results of the DT and blockchain-assisted FL scheme. In order to evaluate the performance of the proposed FL scheme, we consider a multi-tier computing network distributed in a simulated area of 1000m$\times$1000m, which consists of a total of 10 industrial local device clusters, each associated with a group coordinator, and a main coordinator, acting as a server. Hyperledger Fabric 2.3.3 network testing was conducted using Alibaba Cloud's virtual machine in order to evaluate the blockchain performance of the proposed FL solution \cite{9916259}. Table \ref{tb:notation} provides a summary of the parameter configuration and simulation settings. All the results are obtained with a PC of Intel(R) Xeon(R) CPU E5-2670v2 @2.50GHz.
\begin{figure}[htbp]
\centering
\subfigure[$Q=3\times10^{-8}$.]{\includegraphics[width=6.5cm]{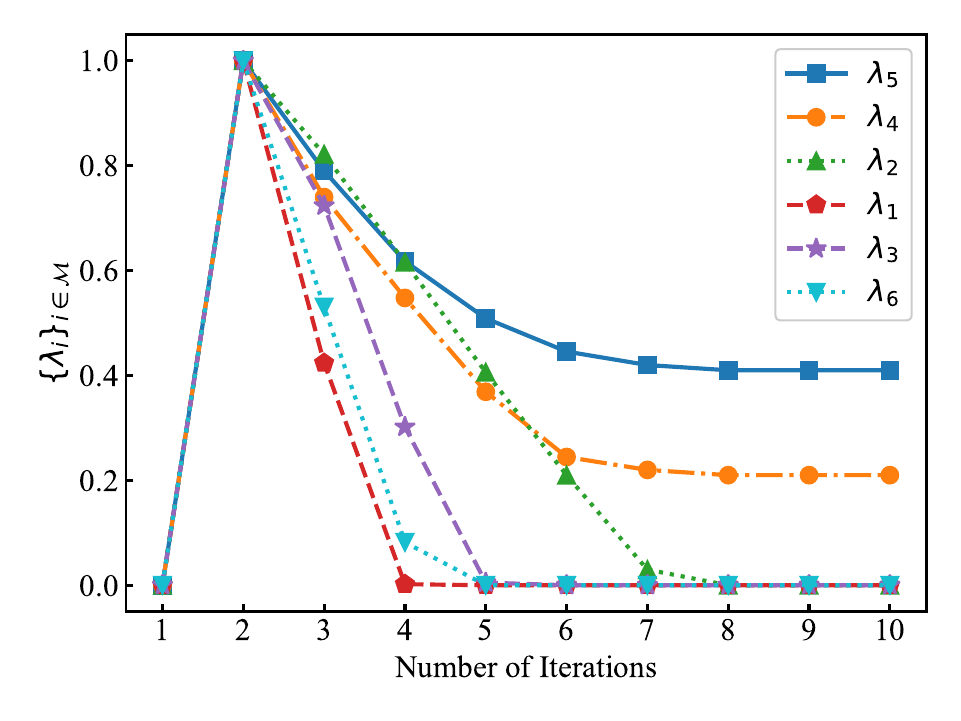}}
\subfigure[$Q=3\times10^{-7}$.]{\includegraphics[width=6.5cm]{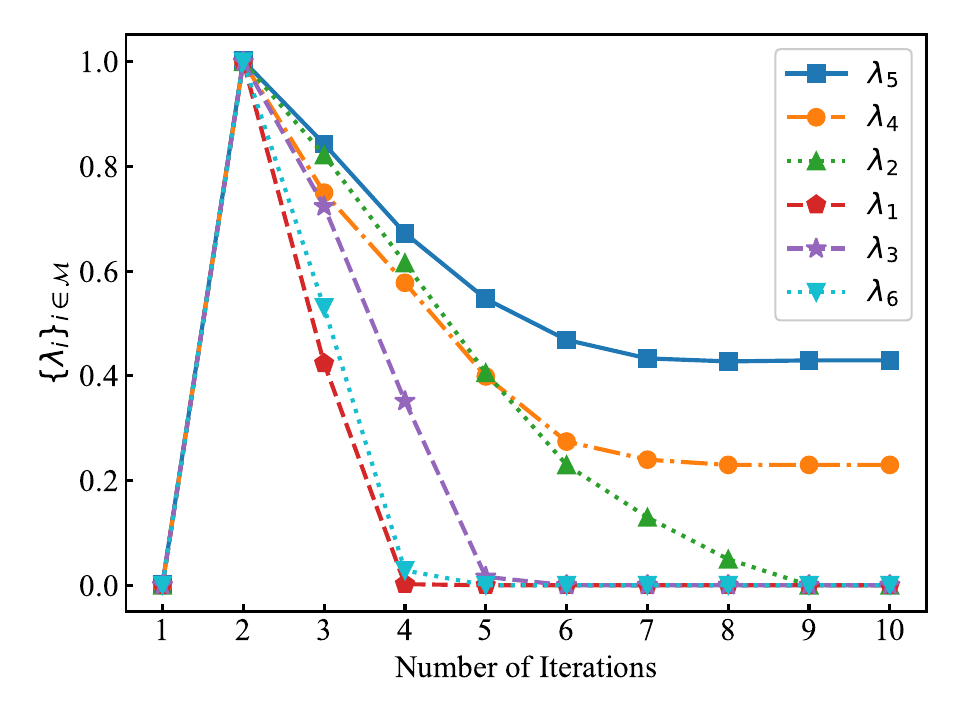}}
\caption{Convergence of $\{\lambda_i\}_{i\in \mathcal{M}}$ in Algorithm \ref{alg:P}.}
\label{E1}
\end{figure}

\begin{figure}[htbp]
    \centering
    \includegraphics[width=6.1cm]{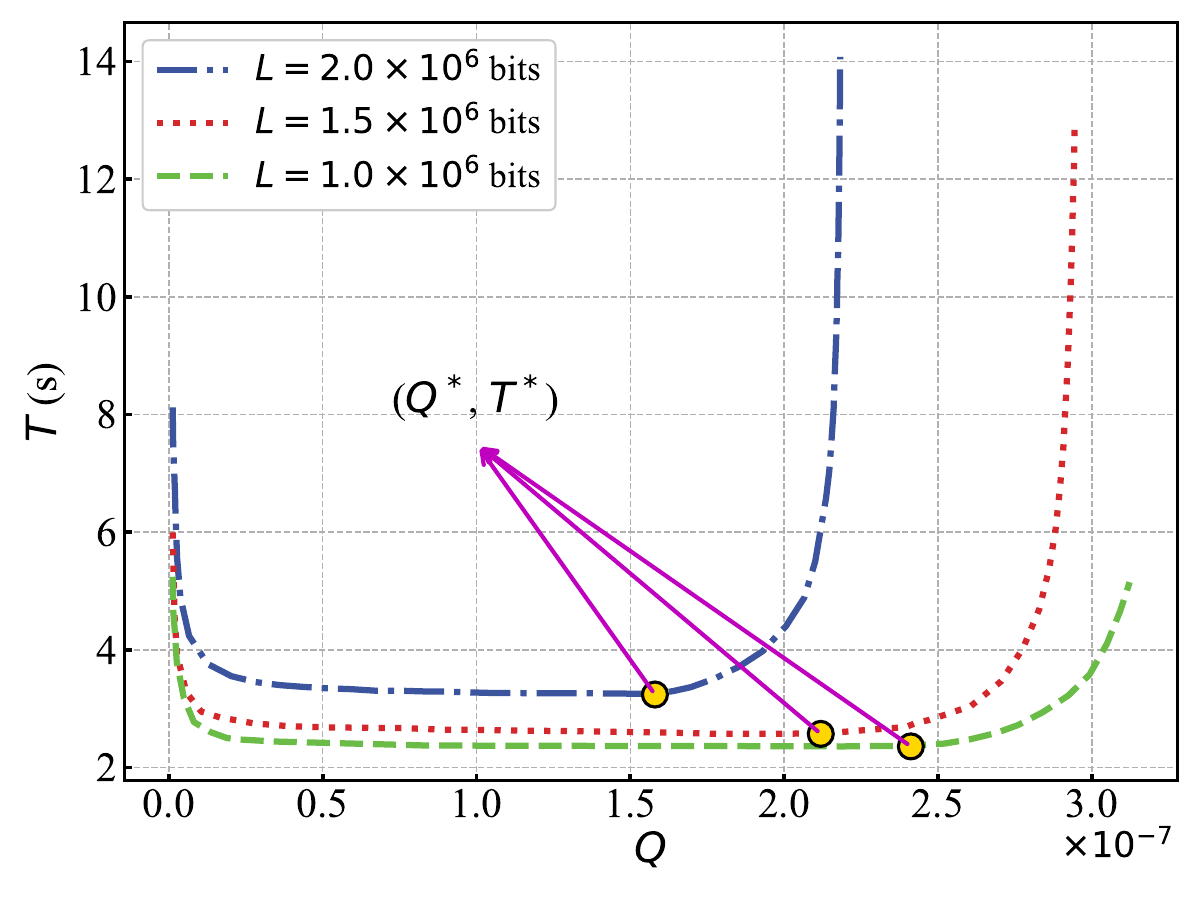}
    \caption{Illustration of Algorithm 2 solving Problem \textbf{P}.}
    \label{E2}
\end{figure}
\par Fig. \ref{E1} and Fig. \ref{E2} illustrate the convergence of the proposed algorithm \ref{alg:P} for solving the problem \textbf{P}. In particular, Fig. \ref{E1} shows that $\{\lambda_i\}_{i\in \mathcal{M}}$ converges rapidly after several iterations at $Q=3\times10^{-8}$ and $Q=3\times10^{-7}$. In the subgradient method we used, adjusting the initial step size led to the appearance of spikes. We observe from Fig. \ref{E2} that we can obtain the optimized $Q$ for minimizing different sizes of uploading local models. In addition, the latency increases as the size of the local model increases. When the size of a local model is fixed, as the value of $Q$ approaches 0, the latency tends to infinity. However, as $Q$ increases, the latency decreases, while as $Q$ exceeds $Q^*$, the latency increases again. In particular, as $Q$ approaches 0, the cooperative jamming is too small, causing the secrecy throughput of the BLDs to approach 0 and the update delay of the DT models to increase. In addition, when $Q$ exceeds $Q^*$, the cooperative jamming on BLDs increases significantly, resulting in a reduction of the available energy for GLDs to upload the local models, thus increasing the upload latency and the total waiting time. This phenomenon shows that the optimal value of $Q$ can be obtained to minimize the FL latency, which is consistent with the theoretical analysis results.
\par For the performance evaluation of the proposed algorithm \ref{alg:P}, a comparison was made between the computation time of CVX's Durobi solver \cite{gurobi2021gurobi} and the enumeration method, with the objective of obtaining identical results. In particular, the test was performed using 10 clusters of local devices in conjunction with a group coordinator. Thanks to the convexity for the subproblems of the problem \textbf{P}, the computational time can be significantly reduced when using the algorithm \ref{alg:P} compared to both the Durobi solver of LINGO and the global solver of LINGO \cite{schrage2006optimization}. As shown in Fig. \ref{E3}, the algorithm \ref{alg:P} is able to achieve identical results as the Durobi solver of CVX and the global solver of LINGO, while requiring significantly less computation time.
\begin{figure}[htbp]
    \centering
    \includegraphics[width=7.8cm]{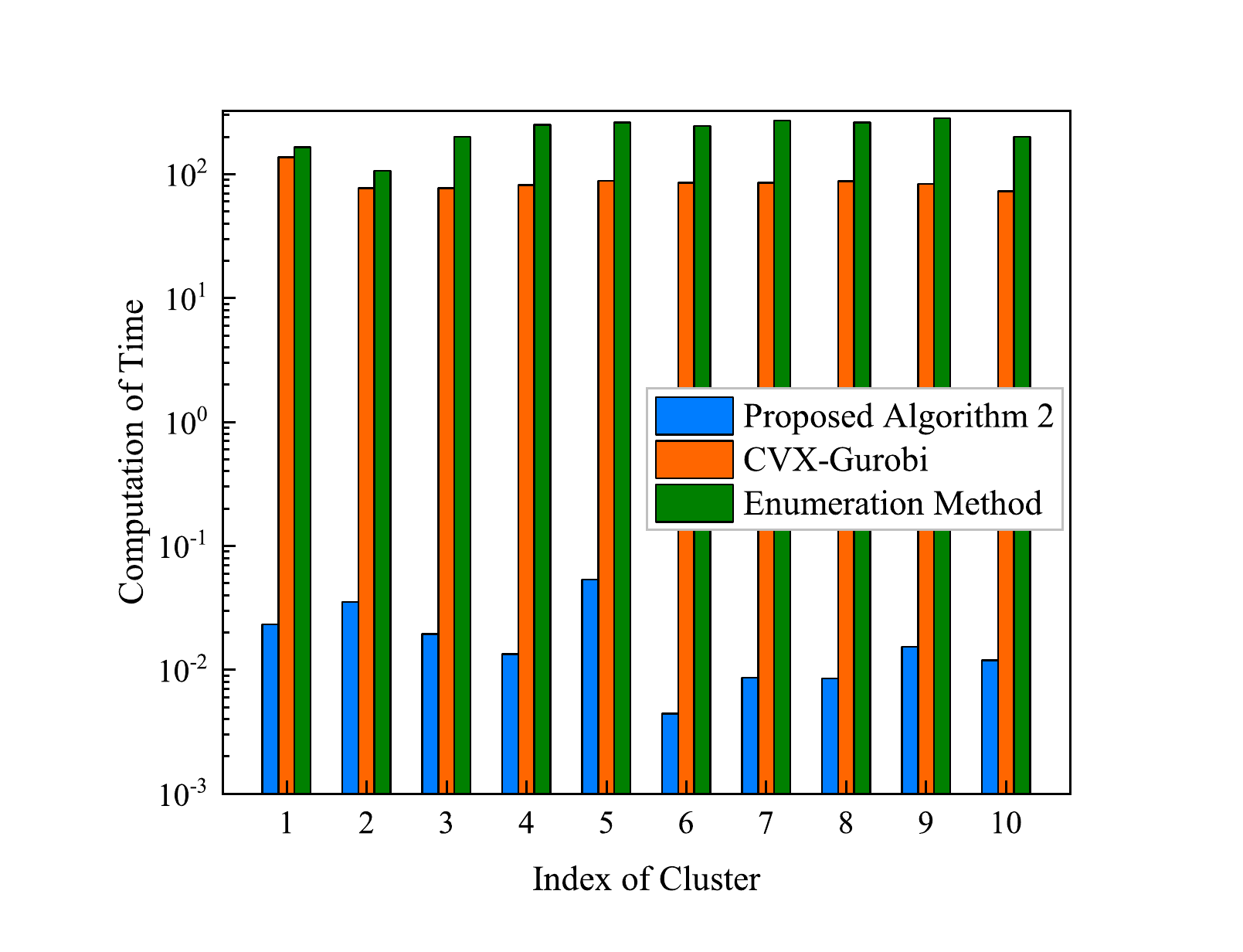}
    \caption{The advantage of Algorithm \ref{alg:P} in computation of time performance.}
    \label{E3}
\end{figure}
\par Fig. \ref{E4} provides further validation of the advantages of our proposed solution in terms of security performance. In particular, the analysis focused on a cluster of 10 local devices linked to a group coordinator. For BLDs DT information updating to the group coordinator, the application of (\ref{eq3}) for the classification and acquisition of four BLDs will increase the security throughput. Furthermore, it will reduce the DT waiting time for synchronisation and the overall FL latency.
\begin{figure}[htbp]
    \centering
    \includegraphics[width=6cm]{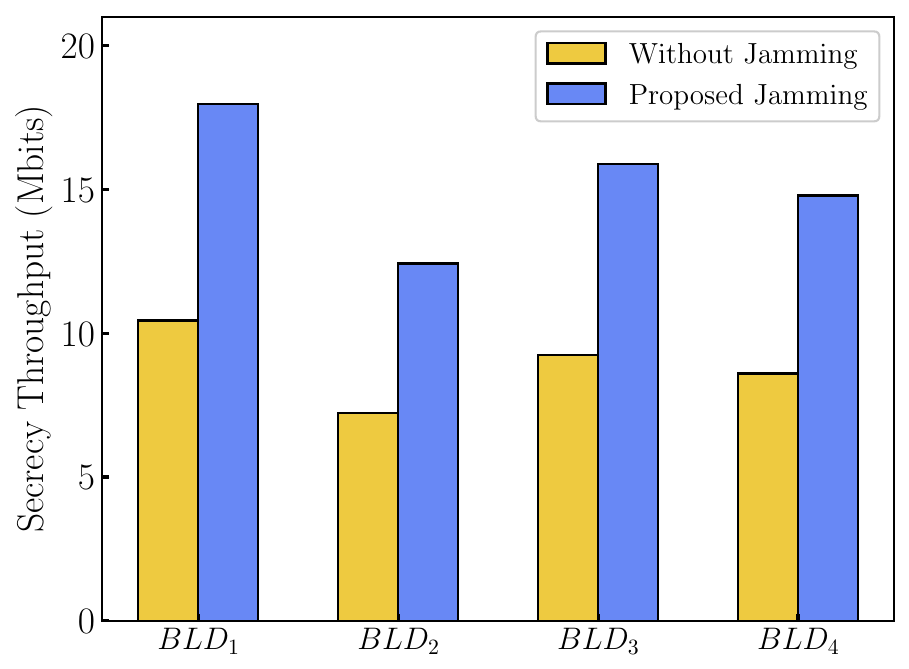}
    \caption{The advantage of proposed cooperative jamming scheme in secure throughput performance.}
    \label{E4}
\end{figure}
\par Fig. \ref{E5} shows the comparison of the global model accuracy of our proposed FL scheme with the benchmarks of FedAVG and a centralized convolutional neural network (CNN), where the FedAVG and the centralized CNN use the MNIST dataset and the CNN model \cite{10190726}. The centralized CNN is trained on the entire dataset, and it is clear that the centralized CNN solution provides the highest accuracy and the fastest convergence time compared to our proposed FL approach and FedAVG. Although the centralized CNN solution is superior to our proposed FL solution, the difference is almost negligible. After 800 seconds of iteration, the accuracy of our solution reaches 97.77\%, which is comparable to that of the centralized CNN solution. It should be noted that although centralized CNN solutions provide better model training performance in terms of accuracy, they perform worse in terms of data security and privacy. Subsequently, FedAVG performs the worst compared to our proposed FL scheme and the centralized CNN solution. Nevertheless, FedAVG achieves an acceptable level of accuracy, about 95.57\%, while ensuring data security and privacy. This is due to the fact that FedAVG protects privacy by training on local data without requiring sharing, and reduces communication overhead by collaborating with multiple terminals and edge devices, which requires longer training times to achieve high accuracy. Since resource-constrained edge devices perform local training on the global model sent by the server, the duration of this local training may increase. On the other hand, compared to FedAVG, our proposed FL scheme uses DTs to perform local training instead of relying on computationally constrained local devices. This approach allows the acquisition of richer and more diverse industrial data, facilitating better learning of different features and patterns, thereby improving both convergence speed and accuracy. Finally, the results presented in Fig. \ref{E5} indicate that incorporating optimal cooperative jamming into our proposed FL scheme accelerates the convergence rate while maintaining comparable levels of accuracy. This is due to the fact that cooperative jamming effectively increases the secrecy throughput of BLDs, ensures synchronized updates corresponding to DTs, and expands the training data available for FL, thus reducing the delay of each FL iteration.
\par To analyze the security gains of our proposed FL solution with respect to blockchain, we configured three malicious nodes to participate in the FL process and compared the global model accuracy between the blockchain-based FL solution \cite{ullah2023verifiable} and the traditional FedAVG without blockchain. As shown in Fig. \ref{Graph4}, the presence of malicious nodes significantly affects the overall performance of FL, resulting in a lower global model accuracy. Our proposed FL scheme achieves a training accuracy of 74.2\%, outperforming both the blockchain-based FL solution and the conventional FedAVG. This improvement can be attributed to our use of blockchain and the validator selection algorithm, which ensures the integrity verification of participants and the trained model, thereby mitigating the influence of malicious nodes on the final aggregation of the global model.

\begin{figure}[htbp]
    \centering
    \includegraphics[width=7.6cm]{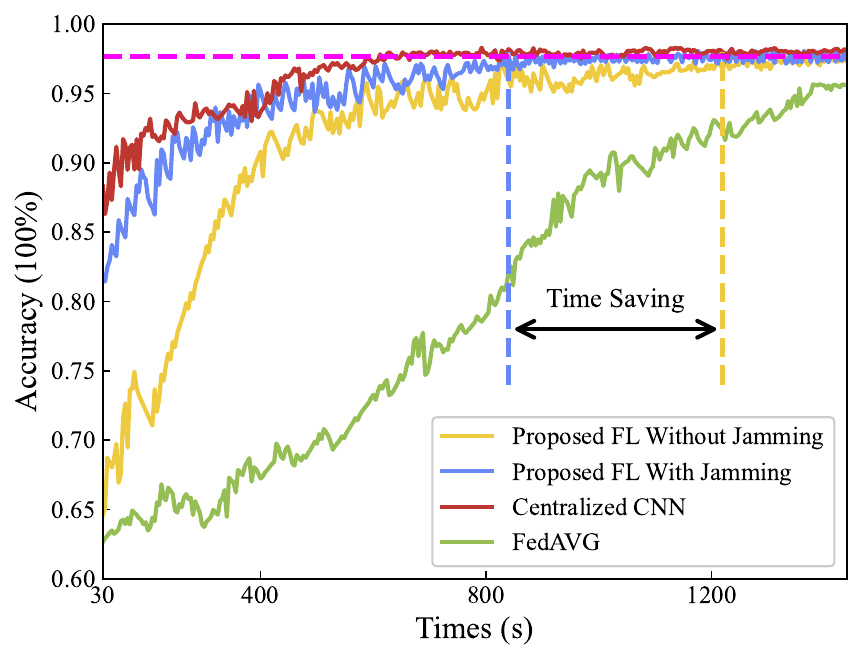}
    \caption{Comparison of accuracy of FL global model.}
    \label{E5}
\end{figure}

\begin{figure}[htbp]
    \centering
    \includegraphics[width=6.5cm]{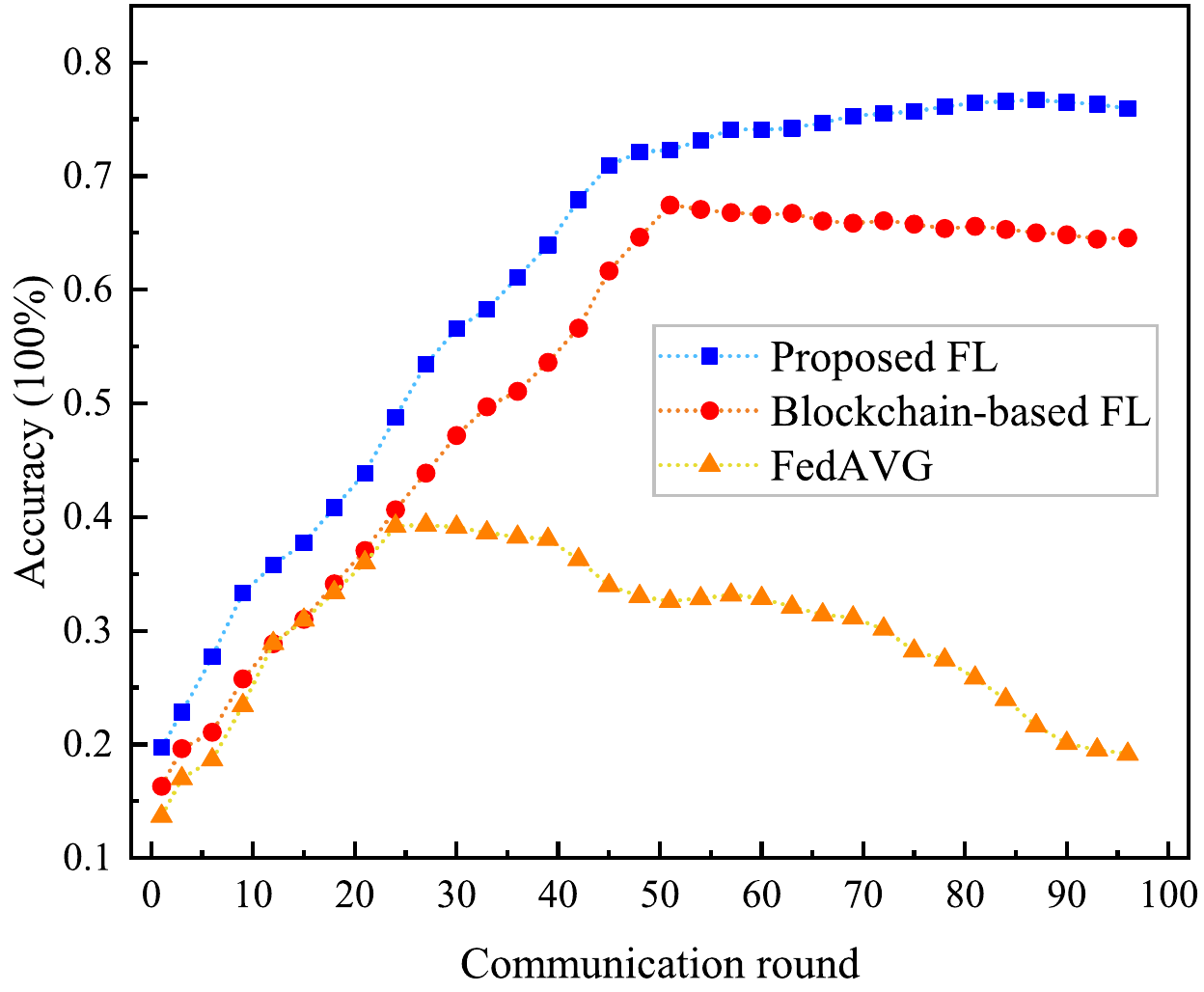}
    \caption{Global model accuracy with malicious nodes.}
    \label{Graph4}
\end{figure}

\begin{figure}[htbp]
    \centering
    \includegraphics[width=7.5cm]{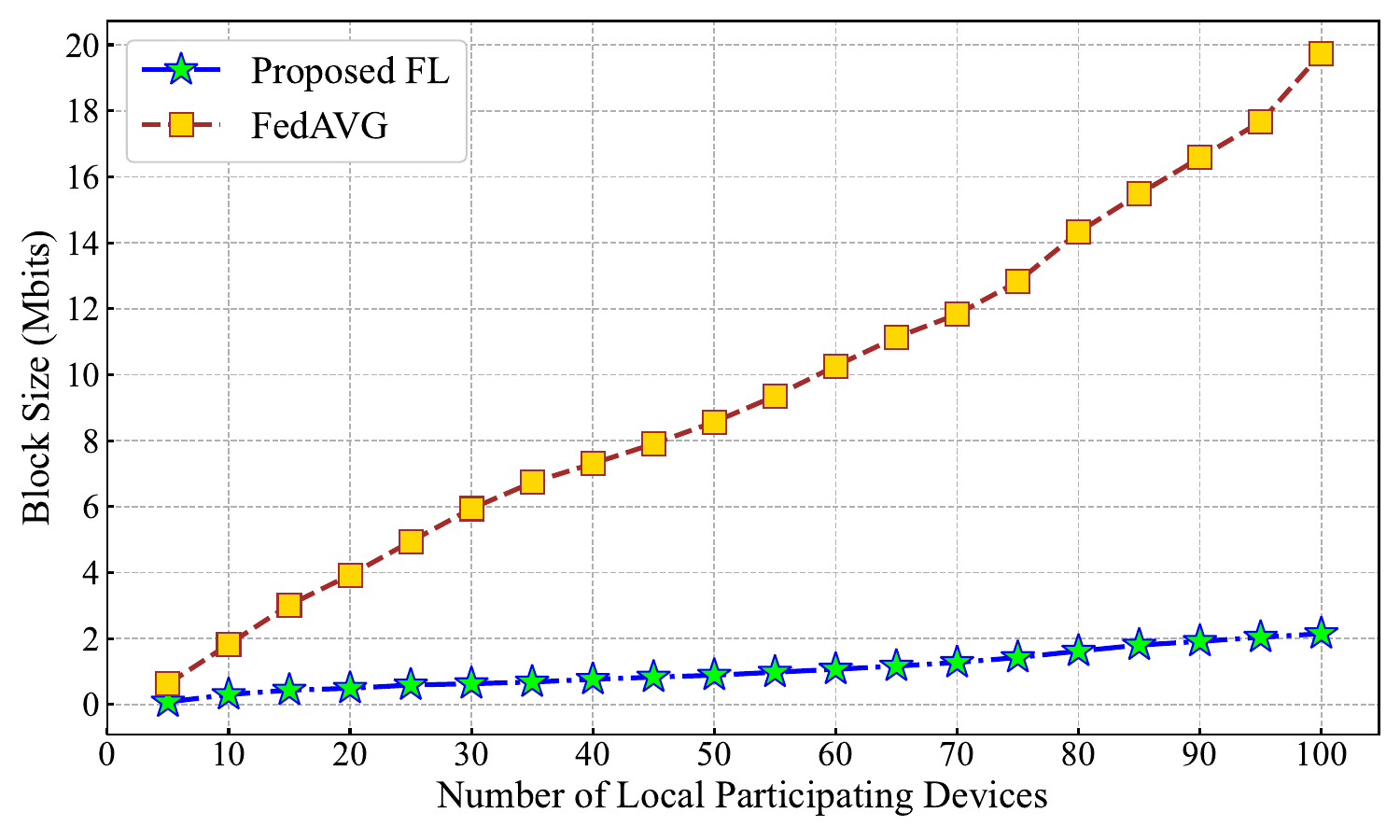}
    \caption{Block size comparison between the proposed FL and FedAVG.}
    \label{E6}
\end{figure}
\par Compared to the conventional FedAVG solution, our proposed FL solution demonstrates superior performance in terms of blockchain resource utilization. Fig. \ref{E6} illustrates the number of local participating devices versus block size for both the proposed FL and FedAVG. It is obvious that the block size for our proposed FL remains relatively small compared to the traditional FedAVG solution. In essence, the number and size of blocks to be added to the blockchain depend on the number of participants, and the number of group coordinators, as well as the number of validators. In a FedAVG context with blockchain, the number of blocks will be equal to the number of participants involved in the learning process. Consequently, the size of the final block, which contains all the aggregated local models, will increase significantly. In our proposed FL scheme, the group coordinator first performs a joint averaging of the local models within the aggregation cluster, after which the validator generates the aggregation model. Both the number and size of blocks added to the blockchain are significantly reduced, which consequently reduces the resources occupied by the blockchain in the cloud tier. 

\section{Conclusion}
In this paper, we present a DT and blockchain-assisted FL scheme to address the security and privacy issues associated with resource-constrained IIoT devices participating in FL. In our proposed approach, resource-constrained edge devices can generate DTs through group coordinators to perform local training for FL, which is then aggregated with the locally trained models from resource-rich devices. We formulated an FL delay minimization problem that includes the upload transmission time of capable local devices, the DT secure synchronization time, the processing rates of capable local devices, the upload power, and the interference from malicious attackers. Then, we proposed a cooperative jamming optimization algorithm to solve this problem. Specifically, we introduced upper bounds on the local device training delay and the effects of aggregation jamming as auxiliary variables, transforming the problem into a convex optimization problem that can be decomposed for independent solution. In addition, we employed a blockchain verification mechanism to ensure the integrity of uploaded models and participant identities throughout the FL process. Simulation results have verified the effectiveness of our proposed cooperative jamming FL delay optimization algorithm, demonstrating that the DT and blockchain-assisted FL scheme significantly outperforms benchmark schemes in terms of execution time, block optimization, and accuracy. In future work, we will investigate the impact of manipulated DT synchronization models on the accuracy of the final model, and explore the potential of combining edge AI to detect malicious nodes.

% if have a single appendix:
%\appendix[Proof of the Zonklar Equations]
% or
%\appendix  % for no appendix heading
% do not use \section anymore after \appendix, only \section*
% is possibly needed

% use appendices with more than one appendix
% then use \section to start each appendix
% you must declare a \section before using any
% \subsection or using \label (\appendices by itself
% starts a section numbered zero.)
%

\appendices
\section{}
Let $f_i(t^{\rm up}_{\rm G})$ denote the second term in constraint (\ref{eq24a}), which is represented as follows:
\begin{equation}\label{eq37}
     f_i(t^{\rm up}_{\rm G})=\frac{n_{\rm C}}{g_{i{\rm C}}}(2^{\frac{L}{Wt^{\rm up}_{\rm G}}}-1)2^{\frac{L}{Wt^{\rm up}_{\rm G}}}, \forall i\in \mathcal{M}.
\end{equation}
Then, we compute the second-order derivative of $f_i(t^{\rm up}_{\rm G})$ with respect to $t^{\rm up}_{\rm G}$, which can be represented as follows: 
\begin{equation}\label{eq38}
\begin{aligned}
    \frac{\partial ^{2} f_i(t^{\rm up}_{\rm G})}{\partial {(t^{\rm up}_{\rm G})}^{2}}=& \frac{(\ln{2})^2n_{\rm C}L^22^{\frac{L}{Wt^{\rm up}_{\rm G}}}}{g_{i{\rm C}}W^2(t^{\rm up}_{\rm G})^3}\\
&\times[2^{\frac{L}{Wt^{\rm up}_{\rm G}}}+(i-1)^2(2^{\frac{(i-1)L}{Wt^{\rm up}_{\rm G}}}-1)\\
& +(i-1)2^{1+\frac{L}{Wt^{\rm up}_{\rm G}}}]>0, \forall i\in \mathcal{M}.
\end{aligned}
\end{equation}

According to (\ref{eq38}), the second-order derivative of $f_i(t^{\rm up}_{\rm G})$ with respect to $t^{\rm up}_{\rm G}$ is always positive. Therefore, $f_i(t^{\rm up}_{\rm G})$ is a convex function with respect to $t^{\rm up}_{\rm G}$. Moreover, since the first and the third terms in constraint (\ref{eq24a}) is convex with respect to $y$ and $\{q_i\}_{i\in \mathcal{M}}$, the left hand side of constraint (\ref{eq24a}) is a convex function. In addition, constraints (\ref{eq24b})-(\ref{eq24d}) are all affine functions. Thus, problem \textbf{P}-GLD is a convex optimization problem. 

% you can choose not to have a title for an appendix
% if you want by leaving the argument blank
\section{}
The value of $\mu$ was set to be equal to one of the sets $\{\frac{\lambda_i\widehat{t}^{\rm up}_{\rm B}}{g_{i{\rm E}}}\}_{i\in \mathcal{M}}$. To demonstrate this, we assume that there are some $\widehat{i}\in \mathcal{M}$ and $\mu=\frac{\lambda_{\widehat{i}}\widehat{t}^{\rm up}_{\rm B}}{g_{i{\rm E}}}$. Consequently, the values of $\widehat{i}$ and $\mu$ permit the derivation of the following conclusion.

\begin{equation}\label{eq39}
\lambda_i\widehat{t}^{\rm up}_{\rm B}-\mu g_{i{\rm E}}=
\begin{cases}
\ge 0, & map(i)\leq map(\widehat{i})-1 \\
=0, & i=\widehat{i}\\
\leq 0, & map(i)\ge map(\widehat{i})+1
\end{cases}.
\end{equation}

As indicated in (\ref{eq34}) and (\ref{eq39}), the optimal solution for $\{q_i\}_{i\in \mathcal{M}}$ can be derived from (\ref{eq35}). It should be noted that the value of $q_i$ does not affect the result when $i=\widehat{i}$ is given. Nevertheless, as a consequence of constraint (\ref{eq24b}), the value of $q_{\widehat{i}}$ is restricted to $\frac{Q-\sum_{map(r)\ge map(\widehat{i})+1}q_rg_{r{\rm E}}}{g_{\widehat{i}{\rm E}}}$.
\par Next, we prove the uniqueness of 
$\widehat{i}$. Suppose there exists 
$\widehat{i}=i^*$ such that the following equation holds:
\begin{equation}\label{eq40}
0\leq \frac{Q-\sum_{map(r)\ge map(i^*)+1}q_rg_{r{\rm E}}}{g_{i^*{\rm E}}}\leq Q^{\rm max}_{i^*}.
\end{equation}
Then we further assume that 

\begin{equation}\label{eq41}
\widehat{i}=i\in \mathcal{M}|map(i)=map(i^*)+m.
\end{equation}

\par If $m\ge 1$, we have
\begin{equation}\label{eq42}
\begin{aligned}
&\frac{Q-\sum_{map(r)\ge map(\widehat{i})+1}q_rg_{r{\rm E}}}{g_{\widehat{i}E}}\\&=\frac{1}{g_{\widehat{i}{\rm E}}}(Q-\sum_{map(r)\ge map(i^*)+1}Q_r^{\rm max}g_{r{\rm E}}+Q_{\widehat{i}}^{\rm max}g_{\widehat{i}{\rm E}}\\&+\sum_{map(i^*)+m+1\ge map(r)\ge map(i^*)+1}Q_r^{\rm max}g_{r{\rm E}})\\&\ge \frac{Q-\sum_{map(r)\ge map(i^*)+1}Q_r^{\rm max}g_{r{\rm E}}+Q_{\widehat{i}}^{\rm max}g_{\widehat{i}{\rm E}}}{g_{\widehat{i}{\rm E}}}\\&\ge\frac{0+Q_{\widehat{i}}^{\rm max}g_{\widehat{i}{\rm E}}}{g_{\widehat{i}{\rm {\rm E}}}}=Q_{\widehat{i}}^{\rm max}.
\end{aligned}
\end{equation}
Similar to (\ref{eq42}), when $m \leq -1$, the following inequality can be obtained:
\begin{equation}\label{eq43}
\frac{Q-\sum_{map(r)\ge map(\widehat{i})+1}q_rg_{r{\rm E}}}{g_{\widehat{i}{\rm E}}}\leq0.
\end{equation}

\par Based on (\ref{eq42}) and (\ref{eq43}), it can be concluded that no $\widehat{i}\neq i^*$ exists which satisfies (\ref{eq40}). So it shows that the value of $\widehat{i}$ is unique.

% use section* for acknowledgment

% Can use something like this to put references on a page
% by themselves when using endfloat and the captionsoff option.
\ifCLASSOPTIONcaptionsoff
  \newpage
\fi

% trigger a \newpage just before the given reference
% number - used to balance the columns on the last page
% adjust value as needed - may need to be readjusted if
% the document is modified later
%\IEEEtriggeratref{8}
% The "triggered" command can be changed if desired:
%\IEEEtriggercmd{\enlargethispage{-5in}}

% references section

% can use a bibliography generated by BibTeX as a .bbl file
% BibTeX documentation can be easily obtained at:
% http://mirror.ctan.org/biblio/bibtex/contrib/doc/
% The IEEEtran BibTeX style support page is at:
% http://www.michaelshell.org/tex/ieeetran/bibtex/
%\bibliographystyle{IEEEtran}
% argument is your BibTeX string definitions and bibliography database(s)
%\bibliography{IEEEabrv,../bib/paper}
%
% <OR> manually copy in the resultant .bbl file
% set second argument of \begin to the number of references
% (used to reserve space for the reference number labels box)
\bibliographystyle{IEEEtran} %ieeetr国际电气电子工程师协会期刊
\bibliography{IEEEabrv,ref} % ref就是之前建立的ref.bib文件的前缀

% biography section
% 
% If you have an EPS/PDF photo (graphicx package needed) extra braces are
% needed around the contents of the optional argument to biography to prevent
% the LaTeX parser from getting confused when it sees the complicated
% \includegraphics command within an optional argument. (You could create
% your own custom macro containing the \includegraphics command to make things
% simpler here.)
%\begin{IEEEbiography}[{\includegraphics[width=1in,height=1.25in,clip,keepaspectratio]{mshell}}]{Michael Shell}
% or if you just want to reserve a space for a photo:

% You can push biographies down or up by placing
% a \vfill before or after them. The appropriate
% use of \vfill depends on what kind of text is
% on the last page and whether or not the columns
% are being equalized.

%\vfill

% Can be used to pull up biographies so that the bottom of the last one
% is flush with the other column.
%\enlargethispage{-5in}

% that's all folks
\end{document}